\documentclass{scrartcl}

\usepackage{graphicx}        
\usepackage{multirow}
\usepackage{enumitem} 
\usepackage{framed} 

\usepackage{amsbsy}
\usepackage{amsfonts, amsthm}
\usepackage{amsmath,amssymb}
\usepackage{natbib}
\usepackage{eso-pic} 
\usepackage{hyperref} 
\providecommand{\bo}{\mathbf}
\providecommand{\bs}{\boldsymbol}
\providecommand{\cov}{\mathrm{\bo Cov}}
\DeclareMathOperator*{\argmax}{arg\,max\,}
\DeclareMathOperator*{\argmin}{arg\,min\,}
\DeclareMathOperator{\tr}{tr}
\DeclareMathOperator{\diag}{diag}
\DeclareMathOperator{\var}{\mathrm{\bo Var}}

\DeclareMathOperator{\Sign}{Sign}
\DeclareMathOperator{\sign}{sign}
\DeclareMathOperator{\scale}{scale}
\DeclareMathOperator{\med}{med}

\theoremstyle{plain}
\newtheorem{theorem}{Theorem}[section]
\newtheorem{proposition}[theorem]{Proposition}
\newtheorem{lemma}[theorem]{Lemma}


\begin{document}

\AddToShipoutPictureBG*{%
  \AtPageUpperLeft{%
    \raisebox{-\baselineskip}{%
      \makebox[5pt][l]{\quad \small This is an Author's Accepted Manuscript of a book chapter to be published in Nordhausen, K., Taskinen, S. (Eds.) \textit{Modern}}}
\raisebox{-2\baselineskip}{\makebox[5pt][l]{
	\textit{Multivariate and Robust Methods,} Springer, 2015}}
}}%

\title{Robust high-dimensional precision matrix estimation}
\author{Viktoria \"Ollerer and Christophe Croux}
\date{\small{Faculty of Economics and Business, KU Leuven, 3000 Leuven, Belgium\\E-mail: viktoria.oellerer@kuleuven.be, christophe.croux@kuleuven.be}}
\maketitle

\abstract{The dependency structure of multivariate data can be analyzed using the covariance matrix $\bs \Sigma$. In many fields the precision matrix $\bs \Sigma^{-1}$ is even more informative. As the sample covariance estimator is singular in high-dimensions, it cannot be used to obtain a precision matrix estimator. A popular high-dimensional estimator is the graphical lasso, but it lacks robustness. We consider the high-dimensional independent contamination model. Here, even a small percentage of contaminated cells in the data matrix may lead to a high percentage of contaminated rows. Downweighting entire observations, which is done by traditional robust procedures, would then results in a loss of information. In this paper, we formally prove that replacing the sample covariance matrix in the graphical lasso with an elementwise robust covariance matrix leads to an elementwise robust, sparse precision matrix estimator computable in high-dimensions. Examples of such elementwise robust covariance estimators are given. The final precision matrix estimator is positive definite, has a high breakdown point under elementwise contamination and can be computed fast.}

\section{Introduction}
\label{Oellerer:sec:Intro}

Many statistical methods that deal with the dependence structures of multivariate data sets start from an estimate of the covariance matrix. For observations $\mathbf{x}_1,\ldots, \mathbf{x}_n\in\mathbb{R}^p$ with $n>p$, the classical sample covariance matrix \begin{align}
\label{Oellerer:eq:sample}
\hat{\bs \Sigma}=\frac{1}{n-1}\sum_{i=1}^n(\mathbf{x}_{i}-\bar{\mathbf{x}})(\mathbf{x}_{i}-\bar{\mathbf{x}})^\top,
\end{align}
where $\bar{\mathbf{x}}\in\mathbb{R}^p$ denotes the mean of the data, is optimal in many ways. It is easy to compute, maximizes the likelihood function for normal data, is unbiased and consistent. However, problems arise when $p$ increases. For $p\approx n$, the sample covariance matrix has low precision and for $p>n$ it even becomes singular, such that the estimated precision matrix $\hat{\bs\Theta}:=\hat{\bs \Sigma}^{-1}$ is not computable anymore. This is a problem since there are many fields where the precision matrix is needed rather than the covariance matrix. Computation of Mahalanobis distances or linear discriminant analysis are just two examples. The most popular field using precision matrices is probably Gaussian graphical modeling, where the nodes of the graph represent the different variables. If an element $(\hat{\bs \Theta})_{ij}$ of the estimated precision matrix equals zero, the variables $i$ and $j$ are independent given all the other variables, and no edge is drawn between the nodes representing variables $i$ and $j$. Therefore, edges correspond to nonzero elements of the precision matrix. As a result, the whole graph can be recovered if the support of the precision matrix is known. This leads to an increasing interest in sparse precision matrices (precision matrices with a lot of zero elements) as interpretation of the graph will be eased if the number of nonzeros in the precision matrix is kept small. 

The three most suitable techniques to compute sparse precision matrices that are also applicable in high dimensions are the graphical lasso (GLASSO) \citep{Friedman2}, the quadratic approximation method for sparse inverse covariance learning (QUIC) \citep{Hsieh} and the constrained $L_1$-minimization for inverse matrix estimation (CLIME) \citep{Cai2}. All three methods start from the sample covariance matrix $\hat{\bs \Sigma}$ and try to minimize a criterion based on the log-likelihood (see Section~\ref{Oellerer:sec:Glasso}). Since these estimators use the nonrobust sample covariance matrix as an input, they are only suitable for clean data that do not contain any outliers.

The problem, however, is that data is rarely clean. Thus, there is need for robust procedures. Most robust procedures downweight observations as a whole (`rowwise downweighting'). However, in many statistical applications only a limited number of observations are available, while large amounts of variables are measured for each observation. Downweighting an entire observation because of one single outlying cell in the data matrix results in a huge loss of information. Additionally, the contaminating mechanism may be independent for different variables. In this case, the probability of having an observation without contamination in any cell is decreasing exponentially when the number of variables increases. As an example, imagine a data set, where each observation contains exactly one contaminated cell. Even though there is not a single fully clean observation, each observation still contains a lot of clean information. Nonetheless, the `classical' robust procedures (that downweight whole observations) cannot deal with a data set like that, since they need at least half of the observations to be absolutely clean of contamination. This type of `cellwise' or `elementwise' contamination was formally described by \cite{Alqallaf}, who extend the usual Tukey-Huber contamination model (the model that considers whole observations as either outlying or clean). In this more extensive setup, a random vector 
\begin{align*}
\mathbf{x}=(\bo I_p-\bo B)\mathbf{y}+\bo B\mathbf{z}
\end{align*}
is observed, where $\mathbf{y}$ follows the model distribution and $\mathbf{z}$ some arbitrary distribution creating contamination, and $\mathbf{y},\bo B$ and $\mathbf{z}$ are independent. Depending on the Bernoulli random variables $B_i$ with $\mathbb{P}[B_i=1]=\epsilon_i$ that build the diagonal matrix $\bo B=\diag(B_1,\ldots,B_p)$, different types of outliers are created. If all $B_i$ are independent ($i=1,\ldots,p$), we speak about `cellwise contamination'. If $\mathbb{P}[B_1=B_2=\ldots=B_p]=1$, rowwise contamination is created. Under any type of contamination, the sample covariance matrix $\hat{\bs \Sigma}$ is not a good estimator anymore, as it can be distorted by just a single outlying observation.

For robust covariance matrix estimation under rowwise contamination, a lot of work has been done. One of the most popular rowwise robust covariance estimators is the minimum covariance determinant \citep{RousseeuwMCD}. It has a high breakdown point and is very fast to compute. However, it is not computable in high-dimensions. Another estimator with very nice theoretical properties is the affine equivariant rank covariance matrix \citep{Oja2}. It is very efficient and has maximal breakdown point. However, its computation is extremely time consuming, especially in high-dimensions. \cite{Maronna2} propose a high-dimensional covariance estimator, an orthogonalized version of the Gnanadesikan-Kettenring estimate (OGK). Another very simple estimator has been developed by \cite{Oja}. Their spatial sign covariance matrix appeals through a simple definition and can be computed very fast, even in high-dimensions. Very recently, \cite{Ollila} have introduced a regularized $M$-estimator of scatter. Under general conditions, they show existence and uniqueness of the estimator, using the concept of geodesic convexity. 

Much less work has been done for covariance estimation under cellwise contamination. A first approach was taken by \cite{VanAelst2}, who defined a cellwise weighting scheme for the Stahel-Donoho estimator. However, as for the original estimate, computation times are not feasible for larger numbers of variables. A very recent approach by \cite{Agostinelli} flags cellwise outliers as missing values and applies afterwards a rowwise robust method that can deal with missing values. By this, it can deal with cellwise and rowwise outliers at the same time, but again, computation for high-dimensions is not achievable. 

The first step to deal with cellwise outliers in very high-dimensions has been taken by \cite{Alqallaf2}. They first compute a pairwise correlation matrix. Afterwards the OGK estimate is applied to obtain a positive semidefinite covariance estimate. This method has been fine tuned by \cite{Tarr2} who use pairwise covariances instead of correlations \citep[see also][]{Garth}. This matrix is then plugged into the graphical lasso (and similar techniques) instead of the sample covariance matrix, resulting in a sparse precision matrix estimate. A very different approach has been taken by \cite{Finegold}. Replacing the assumption of Gaussian distribution of the data with t-distribution gives more robust results since the t-distribution has heavier tails. Assuming a so-called `alternative' t-distribution (see Section~\ref{Oellerer:sec:sim}) results in robustness against cellwise contamination.

In this paper, we consider different high-dimensional precision matrix estimators robust to cellwise contamination. Our approach is similar in spirit as \cite{Tarr2} \citep[see also][]{Garth}, but we emphasize the difference in Section~\ref{Oellerer:sec:def}. We start with pairwise robust correlation estimates from which we then estimate a covariance matrix by multiplication with robust standard deviations. This cellwise robust covariance matrix replaces then the sample covariance matrix in the GLASSO, yielding a sparse, cellwise robust precision matrix estimator. The different nonrobust precision matrix estimators are introduced in Section~\ref{Oellerer:sec:Glasso}. The cellwise robust covariance matrix estimators are explained in Section~\ref{Oellerer:sec:def}. We discuss the selection of the regularization parameter in Section~\ref{Oellerer:sec:rho}. In Section~\ref{Oellerer:sec:BP}, the breakdown point of the proposed precision matrix estimator is derived. Simulation studies are presented in Section~\ref{Oellerer:sec:sim}. In Section~\ref{Oellerer:sec:app}, we discuss possible applications of the proposed method and present a real data example. Section~\ref{Oellerer:sec:conclusion} concludes.

\section{High-dimensional sparse precision matrix estimation for clean data}
\label{Oellerer:sec:Glasso}

Recently, a lot of effort has been put into designing estimators and efficient routines for high-dimensional precision matrix estimation. We focus here on sparse precision matrix estimation, that is, procedures that result in a precision matrix containing many zero elements. In this section, we review three techniques that start from an estimate of the covariance matrix $\hat{\bs \Sigma}$ and then optimize a criterion based on the likelihood function to find the precision matrix estimate. Since the methods are based on the sample covariance matrix, they are only useful if no contamination is present in the data.

The graphical lasso (GLASSO) \citep{Friedman2} maximizes the $L_1$-penalized log-likelihood function:
\begin{align}
\label{Oellerer:eq:GLASSO}
\hat{\bs \Theta}_{GL}(\bo X)=\argmax_{\substack{\bs \Theta\in\mathbb{R}^{p\times p}\\ \bs \Theta\succ0}} \log \det(\bs \Theta)-\tr(\hat{\bs \Sigma}\bs \Theta)-\rho\sum_{j,k=1}^p|(\Theta)_{jk}|,
\end{align}
where $\bo A \succ 0$ denotes a strictly positive definite matrix $\bo A$ and $\rho$ is a regularization parameter. If the regularization parameter $\rho$ is equal to zero, the solution of the GLASSO is the inverse of the sample covariance matrix. The larger the value of $\rho$ is chosen, the more sparse the precision matrix estimate becomes. Since the objective function (\ref{Oellerer:eq:GLASSO}) is concave, there exists a unique solution. \cite{Banerjee} showed that the solution of the GLASSO always results in a strictly positive definite estimate $\hat{\bs \Theta}_{GL}(\bo X)$ for any $\rho>0$, even if $p>n$, and this for any positive semidefinite, symmetric matrix $\hat{\bs \Sigma}$ in (\ref{Oellerer:eq:GLASSO}).

The solution $\hat{\bs \Theta}_{GL}(\bo X)$ can be computed via iterative multiple lasso regression in a block coordinate descent fashion. That means that each column of the final estimate is computed separately. Looking at the first order condition only for the target column, the equation can be seen as a first order condition of a multiple lasso regression. The GLASSO algorithm loops through all columns of the precision matrix iteratively, computing each time the multiple lasso regression, until convergence of the precision matrix estimate is reached. Note that the algorithm does not use the data directly, but only uses it indirectly by using the sample covariance matrix. The GLASSO algorithm is implemented in \textsf{Fortran} and available through the \textsf{R}-package \texttt{glasso} \citep{glasso}. However, this implementation sometimes encounters convergence problems. Therefore, we use in the remainder of this paper, the implementation of the GLASSO algorithm in the \textsf{R}-package \texttt{huge} \citep{huge}, where these convergence issues have been solved.

Another algorithm solving (\ref{Oellerer:eq:GLASSO}) is the quadratic approximation method for sparse inverse covariance learning (QUIC) \citep{Hsieh}. Both the GLASSO algorithm and the QUIC compute a solution for the same objective function. It turned out that QUIC was performing considerably slower in high dimensions than the GLASSO implementation in the R-package huge \citep{huge}, and therefore we will not deal with the former in this paper. 

\medskip
The constrained $L_1$-minimization for inverse matrix estimation (CLIME) is defined as
\begin{gather*}
\hat{\bs \Theta}_1(\bo X)=\argmin_{\bs \Theta\in\mathbb{R}^{p\times p}} \sum_{i=1}^p\sum_{j=1}^p |\bs \Theta_{ij}|\qquad \text{ subject to } \max_{\substack{i=1,\ldots,p\\j=1,\ldots,p}} |(\hat{\bs \Sigma}\bs \Theta-\bo I_p)_{ij}|\leq \rho\\
\hat{\bs \Theta}_C(\bo X)=(\hat{\theta}_{ij}) \quad \text{ with } \quad\hat{\theta}_{ij}=\hat{\theta}_{ij}^1I_{[|\hat{\theta}_{ij}^1|\leq|\hat{\theta}_{ji}^1|]}+\hat{\theta}_{ji}^1I_{[|\hat{\theta}_{ij}^1|>|\hat{\theta}_{ji}^1|]}\quad\text{ and }\quad \hat{\Theta}_1(\bo X)=(\hat{\theta}_{ij}^1).
\end{gather*}
The result is a symmetric matrix that is positive definite with high probability. The CLIME estimator $\hat{\bs \Theta}_C(\bo X)$ converges fast towards the true precision matrix under some mild conditions. The algorithm is implemented in the \textsf{R}-package \texttt{clime} \citep{clime}. Like the GLASSO algorithm, it does not use the data directly, but ony requires the sample covariance matrix as an input. Replacing the sample covariance matrix with a cellwise robust estimator (see Section~\ref{Oellerer:sec:def}), the resulting estimator is similarly accurate (with respect to Kullback-Leibler divergence measure, see Section~\ref{Oellerer:sec:sim}) as the one obtained when plugging the cellwise robust estimator into the GLASSO estimator. In some cases, plugging the robust estimator into the CLIME led to slightly better accuracy. However, the computation time, was much higher than when plugging it into the GLASSO (for $p=60$ the computation time was more than 10 times higher). Since in high-dimensional analysis computation time is important, we will not consider this estimator in the remainder of the paper.

\section{Cellwise robust, sparse precision matrix estimators}
\label{Oellerer:sec:def}

We start with computing a cellwise robust covariance matrix $\bo S$ by pairwise, robust estimation of the covariances. This cellwise robust covariance matrix can then be used to replace the sample covariance matrix in the GLASSO estimator (or another sparse precision matrix estimator). This results in a sparse, cellwise robust precision matrix estimate. Our approach differs from \cite{Tarr2} in the selection of the initial covariance estimate. We estimate robust correlations and standard deviations separately to get the robust covariances. The resulting covariance matrix is then always positive semidefinite. This leads to a simplification of the estimator, increases the breakdown point and speeds up computation substantially.

\subsection{Robust covariance matrix estimation based on pairwise covariances}
\label{Oellerer:sec:cov}
\cite{Tarr2} use the approach of \cite{Gnanadesikan} to obtain a robust, pairwise covariance estimate. It is based on the idea that the robust covariance of two random variables $X$ and $Y$ can be computed using a robust variance. For the population covariance $\cov$ and the population variance $\var$, the following identity holds
\begin{align}
\label{Oellerer:eq:GK}
\cov(X,Y)=\frac{1}{4\alpha\beta}[\var(\alpha X+\beta Y)-\var(\alpha X-\beta Y)]
\end{align}
with $\alpha=1/\sqrt{\var(X)}$, $\beta=1/\sqrt{\var(Y)}$. If $\var$ is replaced by a robust variance estimator, a robust covariance estimate can be obtained. 

This approach has two drawbacks. Firstly, the addition and subtraction of different variables leads to a propagation of the outliers. Therefore, the resulting estimator has a breakdown point of less than $25$\% under cellwise contamination. Secondly, the resulting covariance matrix is not necessarily positive semidefinite. Therefore, \cite{Tarr2} need to apply methods that `make' the matrix positive  semidefinite to be able to use this covariance matrix estimate as a replacement of the sample covariance matrix in a sparse precision matrix estimator. To this end, they use the orthogonalized Gnanadesikan-Kettenring (OGK) approach \citep{Maronna2} as well as the computation of the nearest positive (semi)definite (NPD) matrix as suggested by \cite{Higham}. Starting from an estimate $\tilde{\bo S}\in\mathbb{R}^{p\times p}$ for the covariance matrix of the data $\bo X\in\mathbb{R}^{n\times p}$, NPD finds the closest positive semidefinite matrix $\bo S$ to the covariance estimate $\tilde{\bo S}$ in terms of the Frobenius norm
\begin{align*}
\bo S=\min_{\hat{\bo S}\succeq 0}\|\tilde{\bo S}-\hat{\bo S}\|_F,
\end{align*}
where $\|\bo A\|_F=\sum_{j,k=1}^pa_{jk}^2$ for a matrix $\bo A=(a_{jk})_{j,k=1,\ldots,p}\in\mathbb{R}^{p\times p}$ and $\bo A\succeq0$ denotes a positive semidefinite matrix. An algorithm to compute the nearest matrix $\bo S$ is implemented in the \textsf{R}-package \texttt{Matrix} under the command \texttt{nearPD()}. 
 In our simulations, we observed that NPD gave in general better results than OGK and could also be computed considerably faster. 

\subsection{Robust covariance matrix estimation based on pairwise correlations}
\label{Oellerer:sec:cor}
In contrast to \cite{Tarr2}, we use a robust correlation estimator $r(\cdot)$ to estimate the pairwise covariance matrix $(s_{jk})\in\mathbb{R}^{p\times p}$
\begin{align}
\label{Oellerer:eq:pairwise}
s_{jk}=\scale(\mathbf{x}^j)\scale(\mathbf{x}^k) r(\mathbf{x}^j,\mathbf{x}^k)\qquad j,k=1,\ldots,p
\end{align}
from the data $\bo X=(\mathbf{x}^1,\ldots, \mathbf{x}^p)\in\mathbb{R}^{n\times p}$, where $\scale()$ is a robust scale estimate like the median absolute deviation or the $Q_n$-estimator \citep{RousseeuwQn}. Both estimators are equally robust with a breakdown point of 50\%. Since the $Q_n$-estimator is more efficient at the Gaussian model and does not need a location estimate, we opt for this scale estimate. The amount of contamination that the resulting covariance matrix $\bo S=(s_{jk})_{j,k=1,\ldots,p}$ can withstand depends then on the breakdown point of the scale estimator used (see Section~\ref{Oellerer:sec:BP}). Using the $Q_n$-scale, we obtain an estimator with a breakdown point of $50$\% under cellwise contamination.

There are different possibilities for choosing a robust correlation estimator. Gaussian rank correlation \citep[e.g.][]{Boudt} is defined as the sample correlation estimated from the Van Der Waerden scores (or normal scores) of the data
\begin{align}
\label{Oellerer:eq:Gauss}
r_{Gauss}(\mathbf{x}^j,\mathbf{x}^k)=\frac{\sum_{i=1}^n \Phi^{-1}(\frac{R(x_{ij})}{n+1})\Phi^{-1}(\frac{R(x_{ik})}{n+1})}{\sum_{i=1}^n(\Phi^{-1}(\frac{i}{n+1}))^2},
\end{align}
where $R(x_{ij})$ denotes the rank of $x_{ij}$ among all elements of $\mathbf{x}^j$, the $j$th column of the data matrix. Similarly  $R(x_{ik})$ stands for the rank of $x_{ik}$ among all elements of $\mathbf{x}^k$. Gaussian rank correlation is robust and consistent at the normal model. Still it is asymptotically equally efficient as the sample correlation coefficient at normal data. This makes it a very appealing robust correlation estimator. Note that the Gaussian rank correlations can easily be computed as the sample covariance matrix from the ranks $R(x_{ij})$ of the data. Since the sample covariance matrix is positive semidefinite, the covariance matrix $\bo S$ using Gaussian rank correlation is also positive semidefinite. Therefore, we do not need to apply NPD or OGK to obtain a positive semidefinite covariance estimate. This saves computation time and simplifies the final precision matrix estimator.

Another robust correlation estimator is Spearman correlation \citep{Spearman}. It is defined as the sample correlation of the ranks of the observations:
\begin{align*}
r_{Spearman}(\mathbf{x}^j,\mathbf{x}^k)=\sum_{i=1}^n\frac{(R(x_{ij})-\frac{n+1}{2})(R(x_{ik})-\frac{n+1}{2})}{\sqrt{\sum_{i=1}^n(R(x_{ij})-\frac{n+1}{2})^2\sum_{i=1}^n(R(x_{ik})-\frac{n+1}{2})^2}}.
\end{align*}
Spearman correlation is slightly less efficient than Gaussian rank correlation. Additionally, it is not consistent at the normal model. To obtain consistency, the correlation estimator needs to be non linearly transformed. The transformation, however, destroys the positive semidefiniteness of the estimator $\bo S$, and therefore we do not apply it. In our opinion, the inconsistency is not a huge problem because the asymptotic bias of the Spearman correlation is at most 0.018 \citep{Boudt}. This is also confirmed by the simulations in Section~\ref{Oellerer:sec:sim}, where similar results are obtained with Spearman correlation as with Gaussian rank correlation.

We also consider Quadrant correlation \citep{Blomqvist}. Quadrant correlation is defined as the frequency of centered observations in the first and third quadrant, minus the frequency of centered observations in the second and forth quadrant
\begin{align*}
r_{Quadrant}(\mathbf{x}^j,\mathbf{x}^k)=\frac{1}{n}\sum_{i=1}^n\sign((x_{ij}-\med_{\ell=1,\ldots,n}x_{\ell j})(x_{ik}-\med_{\ell=1,\ldots,n}x_{\ell k})),
\end{align*}
where $\sign(\cdot)$ denotes the sign-function. Quadrant correlation is less efficient than Gaussian rank correlation and Spearman correlation \citep{Dehon}. Like Spearman correlation, Quadrant correlation is only consistent at the normal model if a transformation is applied to the correlation estimate. The final covariance matrix of the consistent Quadrant correlation is no longer positive semidefinite. Since we need a positive semidefinite covariance matrix, we opt for the inconsistent Quadrant correlation. Note that the asymptotic bias at the normal distribution of the inconsistent Quadrant correlation is substantially higher than for Spearman correlation. Taking all this drawback of Quadrant correlation into account, it is not a surprise that we obtain worse simulation results with Quadrant correlation than with Spearman or Gaussian rank correlation in Section~\ref{Oellerer:sec:sim}.

\subsection{Cellwise robust precision matrix estimation}

To obtain a cellwise robust precision matrix estimator, we adapt the definition of the GLASSO estimator given in (\ref{Oellerer:eq:GLASSO}). Recall that GLASSO takes the sample covariance estimator as an input and returns a sparse estimate of the precision matrix as an output. We will replace the sample covariance matrix by the cellwise robust covariance matrices $\bo S$ of Sections~\ref{Oellerer:sec:cov} and \ref{Oellerer:sec:cor} in order to obtain a cellwise robust, sparse precision matrix estimator. Hence, we obtain the following estimator
\begin{align}
\label{Oellerer:eq:GLASSOrob}
\hat{\bs \Theta}_{\bo S}(\bo X)=\argmax_{\substack{\bs \Theta=(\theta_{jk})\in\mathbb{R}^{p\times p}\\ \bs \Theta\succ0}} \log \det(\bs \Theta)-\tr(\bo S\bs \Theta)-\rho\sum_{j,k=1}^p|\theta_{jk}|,
\end{align}
If $\bo S$ is a robust covariance matrix based on pairwise correlations as in Section~\ref{Oellerer:sec:cor}, we refer to $\hat{\bs \Theta}_{\bo S}(\bo X)$ as `correlation based precision matrix estimator'. If $\bo S$ is a covariance matrix based on pairwise covariances as in Section~\ref{Oellerer:sec:cov}, we call $\hat{\bs\Theta}_{\bo S}(\bo X)$ `covariance based precision matrix estimator'. Since the algorithm for computing the GLASSO only requires a positive semidefinite, symmetric matrix $\bo S$ as an input and not the data, we use it to compute $\hat{\bs \Theta}_{\bo S}(\bo X)$.

Like for the original GLASSO algorithm, the final precision matrix estimate $\hat{\bs \Theta}_{\bo S}(\bo X)$ will always be positive definite as long as the initial covariance matrix $\bo S$ is positive semidefinite, even if $p>n$. Therefore, it is important that the initial covariance estimate $\bo S$ is positive semidefinite.

The final precision matrix estimator $\hat{\bs\Theta}_{\bo S}(\bo X)$ will inherit the breakdown point of the initial covariance matrix $\bo S$ (see Section~\ref{Oellerer:sec:BP}). As a result, the correlation based precision matrix estimator has a breakdown point of 50\% under cellwise contamination, while the covariance estimators based on pairwise covariances can have a breakdown point of at most 25\% under cellwise contamination.

The covariance matrices based on pairwise correlations we considered (i.e. the matrices based on Gaussian correlation, Spearman correlation, and Quadrant correlation) are all positive semidefinite. Indeed, they can be computed as sample correlation matrices of transformed data. For instance, the quadrant correlation matrix is a sample correlation matrix of the signs of the differences of the observations to their median. In contrast, covariance matrices based on pairwise covariances need to be transformed to be positive semidefinite for which we used the NPD method described in Section~\ref{Oellerer:sec:cov}. Additionally, all pairwise robust covariances need to be computed according to (\ref{Oellerer:eq:GK}), which may become very time consuming. Therefore, the correlation based precision matrix estimators are much faster to compute than the covariance based precision matrix estimators.

To sum up, correlation based precision matrix estimators are faster to compute and feature a higher breakdown point under cellwise contamination than covariance based precision matrix estimators.

\section{Selection of the regularization parameter $\rho$}
\label{Oellerer:sec:rho}

When selecting the regularization parameter $\rho$, a good trade-off between a high value of the likelihood function and the sparseness of the final precision matrix has to be found. The two most common methods to find the optimal trade-off are the Bayesian Information Criterion (BIC) and cross validation (CV).

The BIC for a $L_1$-regularized precision matrix estimator $\hat{\bs \Theta}_\rho$ for a fixed value of $\rho$ has been given in \cite{Yuan}:
\begin{align*}
BIC_{classic}(\rho)=-\log\det\hat{\bs\Theta}_\rho+\tr(\hat{\bs \Theta}_\rho\hat{\bs \Sigma})+\frac{\log n}{n}\sum_{i\leq j}\hat{e}_{ij}(\rho)
\end{align*}
with $\hat{\bs \Sigma}$ the sample covariance estimate and $\hat{e}_{ij}=1$ if $(\hat{\bs\Theta}_\rho)_{ij}\neq0$ and $\hat{e}_{ij}=0$ otherwise. To obtain a cellwise robust BIC criterion, we replace $\hat{\bs \Sigma}$ by a cellwise robust covariance matrix $\bo S$ and use a cellwise robust precision matrix estimator $\hat{\bs\Theta}_{\rho}$:
\begin{align*}
BIC(\rho)=-\log\det\hat{\bs\Theta}_\rho+\tr(\hat{\bs \Theta}_\rho\bo S)+\frac{\log n}{n}\sum_{i\leq j}\hat{e}_{ij}(\rho).
\end{align*}
Computing the value of BIC over a grid, the value $\rho$ yielding the lowest BIC is chosen.

To perform $K$-fold cross validation, the data first has to be split into $K$ blocks of nearly equal size $n_k$ ($k=1,\ldots,K$). Each block $k$ is left out once and used as test data $(\mathbf{x}^1_{(k)},\ldots,\mathbf{x}^p_{(k)})$. On the remaining data, the precision matrix estimate $\hat{\bs \Theta}_\rho^{(-k)}$ is computed using the regularization parameter $\rho$. As an evaluation criterion, the negative log-likelihood on the test data is computed
\begin{align*}
L^{(k)}(\rho)=-\log\det\hat{\bs \Theta}_\rho^{(-k)}+\tr(\bo S^{(k)}\hat{\bs\Theta}_\rho^{(-k)}),
\end{align*}
where $\bo S^{(k)}$ is the initial robust covariance estimate computed on the test data, i.e.
\begin{align*}
(\bo S^{(k)})_{ij}=\scale(\mathbf{x}^i_{(k)})\scale(\mathbf{x}^j_{(k)})r(\mathbf{x}^i_{(k)}, \mathbf{x}^j_{(k)}) \qquad i,j=1,\ldots,p
\end{align*}
exactly as in Equation~(\ref{Oellerer:eq:pairwise}). By using a robust covariance estimate computed from the test data, outliers present in the test data will not affect the cross-validation criterion too much. This is done over a range of values of $\rho$. The value of $\rho$ minimizing the negative log-likelihood is chosen as the final regularization parameter
\begin{align}
\label{Oellerer:eq:CV}
\hat{\rho}=\argmin_{\rho}\frac{1}{K}\sum_{k=1}^K L^{(k)}(\rho).
\end{align}

As pointed out by a referee, it could occure that some of the test data sets include a percentage of outliers exceeding the breakdown point of the precision matrix estimator, leading to possible breakdown of the cross validation procedure. In our numerical experiments, with contamination levels low compared to the breakdown point and independent for different cells, we did not face this problem. Replacing the sum in (\ref{Oellerer:eq:CV}) by a median, for instance, may provide a way out.

To select a grid of values of $\rho$, we suggest to use the heuristic approach implemented in the \texttt{huge}-package \citep{huge}. It chooses a logarithmic spaced grid of ten values. The largest value of the grid depends on the value of the initial covariance matrix $\bo S$
\begin{align*}
\rho_{\max}=\max\left(\max_{(i,j)\in\{1,\ldots,p\}^2} (\bo S-\bo I_p)_{ij} -\min_{(i,j)\in\{1,\ldots,p\}^2} (\bo S-\bo I_p)_{ij}\right).
\end{align*}
The smallest value of the grid is then a tenth of the largest value $\rho_{\min}=0.1\rho_{\max}$. To obtain a logarithmic spaced grid, ten equally spaced values between $\log(\rho_{\min})$ and $\log(\rho_{\max})$ are transformed via the exponential function. We will use this grid of $\rho$-values in the remainder of the paper.

In general, the BIC criterion can be computed faster than cross validation. However, BIC tends to select too sparse models in practice. In our opinion, the gain in accuracy when using cross validation is worth the increased computation time. Therefore, we will use five-fold cross validation in the remainder of the paper.

\section{Breakdown point}
\label{Oellerer:sec:BP}

In Section~\ref{Oellerer:sec:def}, we obtain precision matrix estimators by replacing the sample covariance matrix in the GLASSO with robust covariance matrices. It is not immediately clear if the cellwise robustness of the initial covariance estimator translates to cellwise robustness of the final precision matrix estimator. Theorem~\ref{Oellerer:theorem:GLASSO_BP} shows that the final precision matrix estimator $\hat{\bs\Theta}_{\bo S}$ indeed inherits the breakdown point of the covariance matrix estimator $\bo S$. Furthermore, we formally show in Proposition~\ref{Oellerer:proposition:BPini} that the proposed initial covariance matrix estimators based on pairwise correlations are cellwise robust.

One of the most common measurements of robustness is the finite-sample breakdown point. We refer to \cite{Maronna} for the standard definition, i.e. under rowwise contamination. The breakdown point denotes the smallest amount of contamination in the data that drives the estimate to the boundary of the parameter space. For example, a location estimator needs to stay bounded, a dispersion estimator needs to stay bounded and away from zero. More formally, define for any symmetric $p\times p$ matrices $\bo A$ and $\bo B$
\begin{align*}
D(\bo A,\bo B)=\max\{|\lambda_1(\bo A)-\lambda_1(\bo B)|, |\lambda_p(\bo A)^{-1}-\lambda_p(\bo B)^{-1}|\},
\end{align*}
where the ordered eigenvalues of a matrix $\bo A$ are denoted by $0\leq\lambda_p(\bo A)\leq\ldots\leq\lambda_1(\bo A)$. We define the \textit{finite-sample breakdown point under cellwise contamination} of a precision matrix estimate $\hat{\bs \Theta}$ as 
\begin{align}
\label{Oellerer:eq:BP}
\epsilon_n(\hat{\bs \Theta},\bo X)=\min_{m=1,\ldots, n}\{\frac{m}{n}:\sup_{\bo X^m} D(\hat{\bs \Theta}(\bo X), \hat{\bs \Theta}(\bo X^m))=\infty\},
\end{align}
where $\bo X^m$ denotes a corrupted sample obtained from $\bo X\in\mathbb{R}^{n\times p}$ by replacing in each column at most $m$ cells by arbitrary values. Similarly, we can define the \textit{explosion} finite-sample breakdown point under cellwise contamination of a covariance matrix estimate $\bo S$ as 
\begin{align}
\label{Oellerer:eq:expBP}
\epsilon^+_n(\bo S,\bo X)=\min_{m=1,\ldots, n}\{\frac{m}{n}:\sup_{\bo X^m} |\lambda_1(\bo S(\bo X))-\lambda_1(\bo S(\bo X^m))|=\infty\},
\end{align}
where $\bo X^m$ denotes a corrupted sample obtained from $\bo X$ by replacing in each column at most $m$ cells by arbitrary values.

Finally, recall the definition of the explosion breakdown point of a univariate scale estimator $\scale(\cdot)$:
\begin{align*}
\epsilon^+_n(\scale,\mathbf{x})=\min_{m=1,\ldots, n}\{\frac{m}{n}:\sup_{\mathbf{x}^m}\scale(\mathbf{x}^m)=\infty\},
\end{align*}
where $\mathbf{x}^m$ is obtained from $\mathbf{x}\in\mathbb{R}^n$ by replacing $m$ of the $n$ values by arbitrary values.

To proof the main theorem of this section, we use different properties of eigenvalues, which we summarize in the following lemma.
\begin{lemma}
Let $\bo A,\bo B\in\mathbb{R}^{p\times p}$ and denote their smallest (largest) eigenvalues by $\lambda_p(\bo A)$ ($\lambda_1(\bo A)$) and $\lambda_p(\bo B)$ ($\lambda_1(\bo B)$), respectively. Then the following statements are true:
\begin{enumerate}[label=(\alph*)]
\item If $\bo A$ and $\bo B$ are positive semidefinite, then \label{Oellerer:prod}
\begin{align}
\label{Oellerer:eq:prod}
\lambda_p(\bo A\bo B)\leq \lambda_1(\bo A)\lambda_p(\bo B),\\
\label{Oellerer:eq:prod2}
\lambda_p(\bo A)\lambda_p(\bo B)\leq \lambda_p(\bo A\bo B).
\end{align}
\item If $\bo A$ and $\bo B$ are symmetric, then \label{Oellerer:sum}
\begin{align}
\label{Oellerer:eq:sum}
\lambda_1(\bo A+\bo B)=\lambda_1(\bo A) + \lambda_1(\bo B).
\end{align}
\item Denoting $\bo A=(a_{ij})_{i,j=1,\ldots,p}$, we have \label{Oellerer:max}
\begin{align}
\label{Oellerer:eq:max}
|\lambda_1(\bo A)|\leq p\max_{i,j=1,\ldots,p}|a_{ij}|.
\end{align}
\end{enumerate}
\end{lemma}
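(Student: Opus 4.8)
The plan is to reduce all three parts to the variational (Rayleigh--Ritz) characterization of the extreme eigenvalues of a symmetric matrix, namely $\lambda_1(\bo M)=\max_{\|v\|=1}v^\top\bo M v$ and $\lambda_p(\bo M)=\min_{\|v\|=1}v^\top\bo M v$ for symmetric $\bo M$, combined with elementary bounds on quadratic forms. These three identities are the common engine, and I would state them once at the outset.

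For part \ref{Oellerer:prod} I would first recall that, for positive semidefinite $\bo A$, the symmetric square root $\bo A^{1/2}$ exists and $\bo A\bo B$ has the same (real, nonnegative) eigenvalues as the symmetric positive semidefinite matrix $\bo A^{1/2}\bo B\bo A^{1/2}$, so that $\lambda_p(\bo A\bo B)=\min_{\|v\|=1}(\bo A^{1/2}v)^\top\bo B(\bo A^{1/2}v)$. Writing $w=\bo A^{1/2}v$, the lower bound (\ref{Oellerer:eq:prod2}) is immediate from $w^\top\bo B w\ge\lambda_p(\bo B)\|w\|^2$ together with $\|w\|^2=v^\top\bo A v\ge\lambda_p(\bo A)$ for unit $v$. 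For the upper bound (\ref{Oellerer:eq:prod}), when $\bo A\succ0$ I would evaluate the minimand at the particular unit vector proportional to $\bo A^{-1/2}u$, where $u$ is the unit eigenvector of $\bo B$ for $\lambda_p(\bo B)$, using $u^\top\bo A^{-1}u\ge\lambda_1(\bo A)^{-1}$; the general positive semidefinite case then follows by the perturbation $\bo A+\delta\bo I_p$ and letting $\delta\downarrow0$. Part \ref{Oellerer:max} is the most direct: take an eigenvalue $\lambda$ of $\bo A$ of largest modulus with eigenvector $v$ normalised so that $\max_i|v_i|=|v_k|=1$, and read off the $k$th coordinate of $\bo A v=\lambda v$ to obtain $|\lambda|=|\sum_j a_{kj}v_j|\le\sum_j|a_{kj}|\le p\max_{i,j}|a_{ij}|$.

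The delicate part is \ref{Oellerer:sum}, and I expect the \emph{equality} there to be the main obstacle. One direction is painless: for any unit vector $v$ we have $v^\top(\bo A+\bo B)v=v^\top\bo A v+v^\top\bo B v\le\lambda_1(\bo A)+\lambda_1(\bo B)$, and maximising the left-hand side over $v$ gives $\lambda_1(\bo A+\bo B)\le\lambda_1(\bo A)+\lambda_1(\bo B)$. The reverse inequality, which is what upgrades this to the asserted equality, is the hard point: it requires that some maximiser $v^\ast$ of $v\mapsto v^\top(\bo A+\bo B)v$ simultaneously satisfy $v^{\ast\top}\bo A v^\ast=\lambda_1(\bo A)$ and $v^{\ast\top}\bo B v^\ast=\lambda_1(\bo B)$, i.e.\ that $\bo A$ and $\bo B$ possess a common top eigenvector.

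Supplying this common maximiser is therefore the crux of part \ref{Oellerer:sum}. For arbitrary symmetric matrices no such vector need exist, as $\bo A=\diag(1,0)$ and $\bo B=\diag(0,1)$ show, where $\lambda_1(\bo A+\bo B)=1$ while $\lambda_1(\bo A)+\lambda_1(\bo B)=2$; the equality holds precisely under extra structure, for instance when $\bo A$ and $\bo B$ commute or when one of them is a scalar multiple of $\bo I_p$, so that their spectral decompositions align. I would accordingly attack (\ref{Oellerer:eq:sum}) by isolating, from the way the identity is invoked in the proof of Theorem~\ref{Oellerer:theorem:GLASSO_BP}, the structural feature of the matrices at hand that forces the extremal eigenvectors of $\bo A$ and $\bo B$ to coincide, and then exhibiting $v^\ast$ explicitly to close the reverse inequality. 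Pinning down that feature, rather than the eigenvalue manipulations of parts \ref{Oellerer:prod} and \ref{Oellerer:max}, is where the real work of the lemma lies.
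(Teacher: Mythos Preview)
Your arguments for parts \ref{Oellerer:prod} and \ref{Oellerer:max} are correct and considerably more explicit than what the paper does: the paper's ``proof'' of this lemma consists entirely of pointers to a matrix-algebra handbook (Seber's \emph{Matrix Handbook for Statisticians}, items 6.76, 6.71, and 6.26a respectively), with no argument given. So on those two parts you have supplied strictly more than the paper.

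On part \ref{Oellerer:sum} you have in fact caught a misprint in the paper. The equality $\lambda_1(\bo A+\bo B)=\lambda_1(\bo A)+\lambda_1(\bo B)$ is false for general symmetric $\bo A,\bo B$, exactly as your $\diag(1,0)$, $\diag(0,1)$ counterexample shows; the correct statement (and what Seber 6.71 actually records, i.e.\ Weyl's inequality) is $\lambda_1(\bo A+\bo B)\le\lambda_1(\bo A)+\lambda_1(\bo B)$. If you look at the one place where (\ref{Oellerer:eq:sum}) is invoked, in the proof of Theorem~\ref{Oellerer:theorem:GLASSO_BP}, only this inequality direction is used: the authors pass from $1\le\lambda_1(\bo S+\rho\Sign\hat{\bs\Theta})\lambda_p(\hat{\bs\Theta})$ to $1\le[\lambda_1(\bo S)+\lambda_1(\rho\Sign\hat{\bs\Theta})]\lambda_p(\hat{\bs\Theta})$. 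So the ``real work'' you anticipate---finding structural hypotheses forcing a common top eigenvector---is a phantom; there is no reverse inequality to establish, and your one-line Rayleigh-quotient argument for the $\le$ direction is all that is needed.
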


\begin{proof}
\ref{Oellerer:prod} \cite{Seber} 6.76 \ref{Oellerer:sum} \cite{Seber} 6.71, \ref{Oellerer:max} \cite{Seber} 6.26a
\end{proof}

Now, we can show that replacing the sample covariance matrix in the GLASSO by a robust covariance matrix $\bo S$ leads to a precision matrix estimator $\hat{\bs\Theta}_{\bo S}(\bo X)$ that inherits its robustness from $\bo S$.

\begin{theorem}
\label{Oellerer:theorem:GLASSO_BP} 
The finite sample breakdown point under cellwise contamination of the robust precision matrix estimator $\hat{\bs\Theta}_{\bo S}(\bo X)$ fulfills 
\begin{align}
\label{Oellerer:eq:GLASSO_BP}
\epsilon_n(\hat{\bs\Theta}_{\bo S},\bo X)\geq\epsilon^+_n(\bo S,\bo X)
\end{align}
with $\bo S$ a positive semidefinite covariance estimator.
\end{theorem}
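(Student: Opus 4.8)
The plan is to show that, as long as the fraction of corrupted cells stays below $\epsilon^+_n(\bo S,\bo X)$, the plugged-in covariance matrix $\bo S$ cannot explode, and to turn this into two-sided control of the eigenvalues of $\hat{\bs\Theta}_{\bo S}$ working directly from the GLASSO objective in (\ref{Oellerer:eq:GLASSOrob}). Write $\bo S_m=\bo S(\bo X^m)$, $\hat{\bs\Theta}_m=\hat{\bs\Theta}_{\bo S}(\bo X^m)$, and $f_m(\bs\Theta)=\log\det(\bs\Theta)-\tr(\bo S_m\bs\Theta)-\rho\sum_{j,k}|\theta_{jk}|$, so that $\hat{\bs\Theta}_m=\argmax_{\bs\Theta\succ0}f_m(\bs\Theta)$ and is strictly positive definite by \cite{Banerjee}. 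Fix $m$ with $m/n<\epsilon^+_n(\bo S,\bo X)$. By the definition (\ref{Oellerer:eq:expBP}) there is a finite $K$ with $\lambda_1(\bo S_m)\leq K$ for every admissible corruption $\bo X^m$, and since $\bo S_m\succeq0$ this also gives $0\leq\tr(\bo S_m)\leq pK$. It then suffices to bound, by constants not depending on the particular $\bo X^m$, the quantity $\lambda_1(\hat{\bs\Theta}_m)$ from above and $\lambda_p(\hat{\bs\Theta}_m)$ away from zero: since the clean estimate $\hat{\bs\Theta}_{\bo S}(\bo X)$ has finite largest and strictly positive smallest eigenvalue, this makes $\sup_{\bo X^m}D(\hat{\bs\Theta}_{\bo S}(\bo X),\hat{\bs\Theta}_m)<\infty$, which is precisely what (\ref{Oellerer:eq:BP}) requires for (\ref{Oellerer:eq:GLASSO_BP}).

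For the bound away from zero on $\lambda_p(\hat{\bs\Theta}_m)$ I would use the subgradient optimality condition for the concave program (\ref{Oellerer:eq:GLASSOrob}), which, exactly as for the ordinary GLASSO, reads $\hat{\bs\Theta}_m^{-1}=\bo S_m+\rho\bo Z_m$ with $\bo Z_m$ symmetric and $\max_{j,k}|(\bo Z_m)_{jk}|\leq1$. Parts \ref{Oellerer:sum} and \ref{Oellerer:max} of the Lemma then give $\lambda_1(\hat{\bs\Theta}_m^{-1})\leq\lambda_1(\bo S_m)+\rho\,\lambda_1(\bo Z_m)\leq K+\rho p$, so $\lambda_p(\hat{\bs\Theta}_m)=\lambda_1(\hat{\bs\Theta}_m^{-1})^{-1}\geq(K+\rho p)^{-1}$, i.e. $\lambda_p(\hat{\bs\Theta}_m)^{-1}\leq K+\rho p$.

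The upper bound on $\lambda_1(\hat{\bs\Theta}_m)$ I would get from the attained value of the objective. On one side, optimality gives $f_m(\hat{\bs\Theta}_m)\geq f_m(\bo I_p)=-\tr(\bo S_m)-\rho p\geq-p(K+\rho)=:-C$. On the other side, $\tr(\bo S_m\hat{\bs\Theta}_m)\geq0$ by part \ref{Oellerer:prod} of the Lemma, $\sum_{j,k}|(\hat{\bs\Theta}_m)_{jk}|\geq\tr(\hat{\bs\Theta}_m)\geq\lambda_1(\hat{\bs\Theta}_m)$, and $\log\det(\hat{\bs\Theta}_m)=\sum_i\log\lambda_i(\hat{\bs\Theta}_m)\leq p\log\lambda_1(\hat{\bs\Theta}_m)$, whence $f_m(\hat{\bs\Theta}_m)\leq p\log\lambda_1(\hat{\bs\Theta}_m)-\rho\,\lambda_1(\hat{\bs\Theta}_m)$. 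Because $t\mapsto p\log t-\rho t\to-\infty$ as $t\to\infty$, the resulting inequality $-C\leq p\log\lambda_1(\hat{\bs\Theta}_m)-\rho\,\lambda_1(\hat{\bs\Theta}_m)$ confines $\lambda_1(\hat{\bs\Theta}_m)$ to a bounded set, giving $\lambda_1(\hat{\bs\Theta}_m)\leq L$ for some finite $L=L(p,\rho,K)$. Together with the previous paragraph this bounds $D(\hat{\bs\Theta}_{\bo S}(\bo X),\hat{\bs\Theta}_m)$ uniformly over $\bo X^m$ and proves (\ref{Oellerer:eq:GLASSO_BP}).

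The eigenvalue manipulations and the comparison $f_m(\hat{\bs\Theta}_m)\geq f_m(\bo I_p)$ are routine; the step carrying the actual content is the explosion bound on $\lambda_1(\hat{\bs\Theta}_m)$, where one must notice that the $\ell_1$ penalty grows \emph{linearly} in the top eigenvalue while $\log\det$ grows only \emph{logarithmically}, so that a uniform lower bound on the attained objective caps the top eigenvalue. A minor point to be careful about is that the whole argument presupposes that for every admissible corruption $\bo S_m$ is still positive semidefinite, so that $\hat{\bs\Theta}_m$ is a genuine strictly positive definite maximizer to which the subgradient condition applies; this is guaranteed by the hypothesis on $\bo S$. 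Incidentally, once $\lambda_1(\hat{\bs\Theta}_m)\leq L$ has been established, the lower bound on $\lambda_p(\hat{\bs\Theta}_m)$ can also be read off from $-C\leq f_m(\hat{\bs\Theta}_m)\leq\log\lambda_p(\hat{\bs\Theta}_m)+(p-1)\log L$, so the subgradient condition is not strictly indispensable.
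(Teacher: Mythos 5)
Your proposal is correct and follows essentially the same route as the paper's proof: the subgradient condition $\hat{\bs\Theta}_m^{-1}=\bo S_m+\rho\bo Z_m$ combined with the eigenvalue lemma to keep $\lambda_p(\hat{\bs\Theta}_m)$ away from zero, and a comparison of the attained objective with $f_m(\bo I_p)$, exploiting that the $\ell_1$ penalty grows linearly while $\log\det$ grows logarithmically in $\lambda_1$, to cap the top eigenvalue. The only (immaterial) differences are that you lower-bound the penalty by $\tr(\hat{\bs\Theta}_m)\geq\lambda_1(\hat{\bs\Theta}_m)$ where the paper uses $\max_{j,k}|\theta_{jk}|\geq\lambda_1(\hat{\bs\Theta}_m)/p$, and your closing observation that the $\lambda_p$ bound could also be extracted from the objective without the subgradient condition.
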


\begin{proof}
Let $1 \leq  m \leq n$ be the maximum number of cells in a column that have been replaced to arbitrary positions. Since $\bo S(\bo X^m)$ is positive semidefinite, $\hat{\bs \Theta}_{\bo S}(\bo X^m)$ is positive definite \citep[see][Theorem 3]{Banerjee}. The estimate $\hat{\bs \Theta}_{\bo S}(\bo X^m)$ needs to fulfill the first order condition of (\ref{Oellerer:eq:GLASSOrob}):
\begin{align}
\label{Oellerer:eq:glasso_FOC}
\mathbf{0}=\hat{\bs \Theta}^{-1}_{\bo S}(\bo X^m)-\bo S(\bo X^m)-\rho\Sign\hat{\bs \Theta}_{\bo S}(\bo X^m),
\end{align}
where $(\Sign \hat{\bs \Theta}_{\bo S}(\bo X^m))_{jk}= \sign \hat{\bs \Theta}_{\bo S}(\bo X^m)_{jk}$ for $j,k=1,\ldots,p$. If $\hat{\bs\Theta}_{\bo S}$ has zero components, the first order condition (\ref{Oellerer:eq:glasso_FOC}) corresponds to a subdifferential and the sign function at 0 needs to be interpreted as the set $[-1,1]$ \citep{Bertsekas}. We then obtain
\begin{align*}
\bo I_p=(\bo S(\bo X^m)+\rho\Sign\hat{\bs \Theta}_{\bo S}(\bo X^m))\hat{\bs \Theta}_{\bo S}(\bo X^m).
\end{align*}
Thus, the smallest eigenvalue fulfills
\begin{align*}
1=\lambda_p(\bo I_p)=\lambda_p((\bo S(\bo X^m)+\rho\Sign\hat{\bs \Theta}_{\bo S}(\bo X^m))\hat{\bs \Theta}_{\bo S}(\bo X^m)).
\end{align*}
Using (\ref{Oellerer:eq:prod}), we get
\begin{align*}
1\leq \lambda_1(\bo S(\bo X^m)+\rho\Sign\hat{\bs \Theta}_{\bo S}(\bo X^m))\lambda_p(\hat{\bs \Theta}_{\bo S}(\bo X^m)).
\end{align*}
By definition $\hat{\bs \Theta}_{\bo S}(\bo X^m)$ is always symmetric, therefore also $\rho\Sign(\hat{\bs \Theta}_{\bo S}(\bo X^m))$. As a result, (\ref{Oellerer:eq:sum}) yields
\begin{align*}
1\leq[\lambda_1(\bo S(\bo X^m))+\lambda_1(\rho\Sign\hat{\bs \Theta}_{\bo S}(\bo X^m))]\lambda_p(\hat{\bs \Theta}_{\bo S}(\bo X^m)).
\end{align*}
As $\hat{\bs \Theta}_{\bo S}(\bo X^m)$ is positive definite, we obtain
\begin{align*}
\frac{1}{\lambda_p(\hat{\bs \Theta}_{\bo S}(\bo X^m))}\leq \lambda_1(\bo S(\bo X^m))+\rho\lambda_1(\Sign\hat{\bs \Theta}_{\bo S}(\bo X^m)).
\end{align*}
From the definition of the $\Sign$-function, we know that $|(\Sign\hat{\bs \Theta}_{\bo S}(\bo X^m))_{ij}|\leq1$. Together with (\ref{Oellerer:eq:max}), this yields
\begin{align}
\label{Oellerer:eq:aux1}
|\lambda_1(\Sign\hat{\bs \Theta}_{\bo S}(\bo X^m))|\leq p,
\end{align}
resulting in 
\begin{align}
\label{Oellerer:eq:ineq1}
\lambda_p(\hat{\bs \Theta}_{\bo S}(\bo X^m))^{-1}\leq\lambda_1(\bo S(\bo X^m))+\rho p.
\end{align}

From the definition of the explosion breakdown point (\ref{Oellerer:eq:expBP}), we know that for every $\tilde{m} < n \epsilon_n^+(\bo S,\bo X)$ there exists an $M <\infty$ such that
\begin{align}
\lambda_1(\bo S(\bo X^{\tilde{m}}))\leq M+\lambda_1(\bo S(\bo X))\label{Oellerer:eq:bps}.
\end{align}
Using (\ref{Oellerer:eq:ineq1}) in (\ref{Oellerer:eq:bps}) yields
\begin{gather*}
0\leq\lambda_p(\hat{\bs \Theta}_{\bo S}(\bo X^{\tilde{m}}))^{-1}\leq \lambda_1(\bo S(\bo X^{\tilde{m}}))+\rho p\leq   M+\lambda_1(\bo S(\bo X))+\rho p.
\end{gather*}
Together with the triangle inequality this gives
\begin{align}
\label{Oellerer:eq:aux_lp}
|\lambda_p(\hat{\bs \Theta}_{\bo S}(\bo X^{\tilde{m}}))^{-1}-\lambda_p(\hat{\bs \Theta}_{\bo S}(\bo X))^{-1}|&\leq \lambda_p(\hat{\bs \Theta}_{\bo S}(\bo X^{\tilde{m}}))^{-1}+\lambda_p(\hat{\bs \Theta}_{\bo S}(\bo X))^{-1}\\
&\leq M+\lambda_1(\bo S(\bo X))+\lambda_p(\hat{\bs \Theta}_{\bo S}(\bo X))^{-1}+\rho p.
\end{align}

To obtain a bound for the largest eigenvalue $\lambda_1(\hat{\bs\Theta}_S(\bo X^m))$, denote for any matrix $\bs \Theta\succ0$ 
\begin{align*}
Q(\bs \Theta, \bo X)=\log\det\bs\Theta-\tr(\bo S(\bo X)\bs\Theta)-\rho\sum_{j,k=1}^p|\theta_{jk}|.
\end{align*}
For the identity matrix, we obtain for contaminated data $\bo X^m$
\begin{align*}
Q(\bo I_p,\bo X^m)=0-\tr(\bo S(\bo X^m))-\rho p\geq-p\lambda_1(\bo S(\bo X^m))-\rho p
\end{align*}
since $\tr(\bo A)=\sum_{j=1}^p\lambda_j(\bo A)\leq p\lambda_1(\bo A)$ for any  matrix $\bo A\in\mathbb{R}^{p\times p}$. Using Equation (\ref{Oellerer:eq:bps}), this leads to 
\begin{align*}
Q(\bo I_p,\bo X^{\tilde{m}})\geq-pM-p\lambda_1(\bo S(\bo X))-\rho p.
\end{align*}

For any matrix $\tilde{\bs \Theta}\succ 0$, we obtain with (\ref{Oellerer:eq:prod2})
\begin{align}
\label{Oellerer:eq:myaux1}
\tr(\bo S\tilde{\bs \Theta})=\sum_{j=1}^p\lambda_j(\bo S\tilde{\bs \Theta})\geq\lambda_p(\bo S\tilde{\bs \Theta})\geq\lambda_p(\bo S)\lambda_p(\tilde{\bs \Theta})\geq 0.
\end{align}
Furthermore, (\ref{Oellerer:eq:max})  yields
\begin{align}
\label{Oellerer:eq:myaux2}
\sum_{i,j=1}^p|\tilde{\theta}_{jk}|\geq \max_{j,k=1,\ldots,p}|\tilde{\theta}_{jk}|\geq \frac{1}{p}\lambda_1(\tilde{\bs \Theta}).
\end{align}
Equations (\ref{Oellerer:eq:myaux1}) and (\ref{Oellerer:eq:myaux2}) lead to
\begin{align*}
Q(\tilde{\bs \Theta},\bo X^m)&=\log\det\tilde{\bs \Theta}-\tr(S\tilde{\bs \Theta})-\rho\sum_{j,k=1}^p|\theta_{jk}|\\
&\leq p \log \lambda_1(\tilde{\bs \Theta})-\frac{\rho}{p}\lambda_1(\tilde{\bs \Theta})
\end{align*}
because $\det(\bo A)=\prod_{j=1}^p\lambda_j(\bo A)\leq \lambda_1(\bo A)^p$ for any matrix $\bo A\in\mathbb{R}^{p\times p}$.

The function $x \mapsto p\log x-\rho x/p$ is concave and attains its maximum at $x=p^2/\rho$. Therefore, there exists a finite constant $M^\ast>p^2/\rho$, such that 
\begin{align*}
p\log M^\ast-\frac{\rho}{p}M^\ast=-pM-p\lambda_1(\bo S(\bo X))-\rho p.
\end{align*}
As a results, we know that any matrix $\tilde{\bs \Theta}$ with $\lambda_1(\tilde{\bs \Theta})>M^\ast$ is not optimizing (\ref{Oellerer:eq:GLASSOrob}) since $Q(\bo I_p, \bo X^{\tilde{m}})>Q(\tilde{\bs \Theta}, \bo X^{\tilde{m}})$. Hence,
\begin{align*}
0\leq\lambda_1(\hat{\bs\Theta}_{\bo S}(\bo X^{\tilde{m}}))\leq M^\ast.
\end{align*}
Together with the triangular inequality, this yields
\begin{align}
\label{Oellerer:eq:aux_l1}
|\lambda_1(\hat{\bs\Theta}_{\bo S}(\bo X^{\tilde{m}}))-\lambda_1(\hat{\bs\Theta}_{\bo S}(\bo X))|&\leq\lambda_1(\hat{\bs\Theta}_{\bo S}(\bo X^{\tilde{m}}))+\lambda_1(\hat{\bs\Theta}_{\bo S}(\bo X))\\
&\leq M^\ast+\lambda_1(\hat{\bs\Theta}_{\bo S}(\bo X)).
\end{align}

Thus, (\ref{Oellerer:eq:aux_lp}) and (\ref{Oellerer:eq:aux_l1}) lead to
\begin{align*}
\sup_{\bo X^{\tilde{m}}}D(\hat{\bs \Theta}_{\bo S}(\bo X), \hat{\bs \Theta}_{\bo S}(\bo X^{\tilde{m}}))\leq \max\{M+\lambda_1(\bo S(\bo X))+\rho p + \lambda_p(\hat{\bs \Theta}_{\bo S}(\bo X))^{-1},  M^\ast+\lambda_1(\hat{\bs \Theta}_{\bo S}(\bo X))\}
\end{align*}
for any $\tilde{m} < n \epsilon_n^+(\bo S,\bo X)$, yielding (\ref{Oellerer:eq:GLASSO_BP}).
\end{proof}

We still need to verify that the covariance matrix estimator based on pairwise correlations has a high explosion breakdown point under cellwise contamination.

\begin{proposition}
\label{Oellerer:proposition:BPini}
The explosion breakdown point under cellwise contamination of the covariance estimator based on pairwise correlations as defined in (\ref{Oellerer:eq:pairwise}) depends on the explosion breakdown point of the scale estimator used
\begin{align}
\label{Oellerer:eq:BPini}
\epsilon^+_n(\bo S, \bo X)\geq\max_{j=1,\ldots,p}\epsilon_n^+(\scale, \mathbf{x}^j).
\end{align}
\end{proposition}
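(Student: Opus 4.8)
The plan is to reduce the whole statement to a single scalar bound on the entries of $\bo S$ and then apply part~\ref{Oellerer:max} of the Lemma, $|\lambda_1(\bo A)|\le p\max_{j,k}|a_{jk}|$. The structural input needed is that the robust correlation estimator satisfies $|r(\mathbf{x}^j,\mathbf{x}^k)|\le 1$ for every pair of columns and every configuration of the data. This holds for Gaussian rank, Spearman and Quadrant correlation precisely because each of them is a genuine sample correlation coefficient of a transformation of the data (normal scores, ranks, or signs of centred observations) — the same property that already delivered positive semidefiniteness of $\bo S$ in Section~\ref{Oellerer:sec:cor}. Hence $|s_{jk}|=\scale(\mathbf{x}^j)\scale(\mathbf{x}^k)\,|r(\mathbf{x}^j,\mathbf{x}^k)|\le\scale(\mathbf{x}^j)\scale(\mathbf{x}^k)$, so an entry of $\bo S$ cannot explode unless one of the robust scales does.

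First I would fix an integer $m$ with $m/n<\min_{j=1,\ldots,p}\epsilon^+_n(\scale,\mathbf{x}^j)$; under the standing convention that the same scale estimator, with data-independent explosion breakdown point (such as $Q_n$), is used in every column, this minimum coincides with the maximum on the right-hand side of~(\ref{Oellerer:eq:BPini}). By definition of $\epsilon^+_n(\scale,\mathbf{x}^j)$, for each $j$ there is a finite constant $M_j$ bounding $\scale(\cdot)$ over \emph{all} columns obtained from $\mathbf{x}^j$ by altering at most $m$ of its $n$ cells; put $M:=\max_j M_j<\infty$. Any corrupted matrix $\bo X^m$ (at most $m$ replaced cells per column) then has every one of its columns of such an admissible type, so $\scale(\cdot)$ evaluated at each of its columns is at most $M$, and therefore, using $|r|\le1$,
\begin{align*}
\max_{j,k=1,\ldots,p}|s_{jk}(\bo X^m)|\le M^2.
\end{align*}
Part~\ref{Oellerer:max} of the Lemma gives $|\lambda_1(\bo S(\bo X^m))|\le pM^2$, and the triangle inequality then yields $\sup_{\bo X^m}|\lambda_1(\bo S(\bo X))-\lambda_1(\bo S(\bo X^m))|\le|\lambda_1(\bo S(\bo X))|+pM^2<\infty$. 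Thus $m/n$ does not lie in the set defining $\epsilon^+_n(\bo S,\bo X)$ in~(\ref{Oellerer:eq:expBP}); letting $m$ range over all admissible values gives~(\ref{Oellerer:eq:BPini}).

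I do not expect a genuine obstacle; the one point that must be handled carefully is \emph{uniformity} — the bound $M$ on the scales, and hence on $\lambda_1$, has to be fixed before the adversary picks the corruption, which is exactly what the univariate explosion breakdown point supplies, so one should make explicit that $M_j$ does not depend on the particular corrupted column. A supplementary remark worth including is that the bound is sharp: replacing $\lceil n\,\epsilon^+_n(\scale,\mathbf{x}^j)\rceil$ cells in a single column forces the corresponding diagonal entry $s_{jj}=\scale(\cdot)^2$ to diverge, whence $\lambda_1(\bo S)\ge s_{jj}\to\infty$, so under the common-scale convention~(\ref{Oellerer:eq:BPini}) in fact holds with equality — the cellwise explosion breakdown point of $\bo S$ is exactly that of the underlying scale estimator.
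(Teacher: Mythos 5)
Your proof is correct and follows essentially the same route as the paper's one\-line argument: the triangle inequality, the bound $|\lambda_1(\bo S(\bo X^m))|\le p\max_{j,k}|s_{jk}(\bo X^m)|$ from part~\ref{Oellerer:max} of the Lemma, $|r|\le 1$, and boundedness of the column scales below their univariate explosion breakdown point. Your explicit observation that the argument naturally delivers $\min_j\epsilon^+_n(\scale,\mathbf{x}^j)$, which coincides with the stated $\max_j$ only because the same scale estimator with a data\-independent breakdown point is used in every column, together with the uniformity of the bound $M$, makes precise what the paper leaves implicit.
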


\begin{proof}
Using the triangular inequality, (\ref{Oellerer:eq:max}), (\ref{Oellerer:eq:pairwise}) and the fact that a correlation has an absolute value smaller than 1, we obtain
\begin{align*}
|\lambda_1(\bo S(\bo X))-\lambda_1(\bo S(\bo X^m))|\leq |\lambda_1(\bo S(\bo X))|+p\max_{j,k=1,\ldots,p}|\scale((\bo X^m)^j)||\scale((\bo X^m)^k)|
\end{align*}
for any $m\in\{1,\ldots,n\}$, where $(\bo X^m)^j$ denots the $j$th column of matrix $\bo X^m$, and therefore (\ref{Oellerer:eq:BPini}).
\end{proof}

Note that the explosion breakdown point of the scale estimator in (\ref{Oellerer:eq:BPini}) is the breakdown point of a univariate estimator. Breakdown points of scale estimators have been studied extensively \citep[see e.g.][]{RousseeuwQn}. The median absolute deviation as well as the $Q_n$-estimator have an explosion breakdown point of $50\%$, resulting in a breakdown point of $50\%$ under cellwise contamination for the correlation based precision matrix estimator proposed in Section~\ref{Oellerer:sec:def}.

\section{Simulations}
\label{Oellerer:sec:sim}

In this section, we present a simulation study to compare the performance of the estimators introduced in Section~\ref{Oellerer:sec:def}. For the correlation based precision matrix estimator, we choose the $Q_n$-estimator as a scale. As robust correlation, we use Gaussian rank correlation, Spearman correlation and Quadrant correlation, resulting in the three different estimators `GlassoGaussQn', `GlassoSpearmanQn' and `GlassoQuadQn', respectively. As a point of reference, we also include the nonrobust, classical GLASSO (\ref{Oellerer:eq:GLASSO}) and abbreviate it as `GlassoClass'. Additionally, we compute a covariance based precision matrix estimate, where we choose $Q_n$ as the scale estimator and NPD to obtain a positive semidefinite covariance estimate (`GlassoNPDQn'). This estimator represents the class of estimators studied by \cite{Tarr2}. 

To compare to a rowwise, but not cellwise robust estimator that can be computed in high dimensions, we consider the spatial sign covariance matrix \citep{Oja}
\begin{align}
\label{Oellerer:eq:spatialini}
\bo S_{sign}^{incons}(\bo X)=\frac{1}{n}\sum_{i=1}^n U(\mathbf{x}_i-\mathbf{\hat{\mu}})U(\mathbf{x}_i-\mathbf{\hat{\mu}})^\top,
\end{align}
where $U(\mathbf{y})=\|\mathbf{y}\|_2^{-1}\mathbf{y}$ if $\mathbf{y}\neq \bo 0$ and $U(\mathbf{y})=0$ otherwise, and $\|\mathbf{y}\|_2$ stands for the Euclidean norm. The location estimator $\mathbf{\hat{\mu}}$ is the spatial median, i.e. the minimizer of $\sum_{i=1}^n\|\mathbf{x}_i-\mathbf{\mu}\|_2$. Since only the eigenvectors of (\ref{Oellerer:eq:spatialini}) are consistent estimators for the eigenvectors of the covariance matrix at the normal model, we still need to compute consistent eigenvalues. Let $\bo U$ denote the matrix of eigenvectors of (\ref{Oellerer:eq:spatialini}). The eigenvalues of the covariance matrix are then given by the marginal variances of $\bo U^\top\mathbf{x}_1,\ldots,\bo U^\top\mathbf{x}_n$. To robustly estimate these marginal variances, we use the robust scale estimator $Q_n$. Denote the matrix of robust eigenvalues as $\bs \Lambda=\diag(\hat{\lambda}_1,\ldots,\hat{\lambda}_p)$. Then the consistent spatial sign covariance matrix is 
\begin{align*}
\bo S_{sign}(\bo X)=\bo U \bs \Lambda \bo U^\top.
\end{align*}
The spatial sign covariance matrix is positive semidefinite. Therefore, we use it as an input in the GLASSO, as in Equation (\ref{Oellerer:eq:GLASSOrob}), to obtain a sparse precision matrix estimate which is robust under rowwise contamination. We refer to this precision matrix estimator as `GlassoSpSign'. Finally, we also add the inverse of the classical sample covariance matrix (\ref{Oellerer:eq:sample}) as a benchmark (`Classic'), where it can be computed. For all estimators, we select the regularization parameter $\rho$ via five-fold cross validation over a logarithmic spaced grid (see Section~\ref{Oellerer:sec:rho}).

\medskip
\textit{Sampling schemes.\quad}
We use in total four sampling schemes covering the scenarios of a banded precision matrix, a sparse precision matrix, a dense precision matrix \citep{Cai2} and a diagonal precision matrix. Each sampling scheme is defined through the true precision matrix $\bs \Theta_0\in\mathbb{R}^{p\times p}$ for $i,j=1,\ldots,p$:
\begin{itemize}
\item `banded': $(\bs \Theta_0)_{ij}=0.6^{|i-j|}$
\item `sparse': $\bs \Theta_0=\bo B+\delta \bo I_p$ with $\mathbb{P}[b_{ij}=0.5]=0.1$ and $\mathbb{P}[b_{ij}=0]=0.9$ for $i\neq j$. The parameter $\delta$ is chosen such that the conditional number of $\bs \Theta_0$ equals $p$. Then the matrix is standardized to have unit diagonals.
\item `dense': $(\bs \Theta_0)_{ii}=1$ and $(\bs \Theta_0)_{ij}=0.5$ for $i\neq j$
\item `diagonal': $(\bs \Theta_0)_{ii}=1$ and $(\bs \Theta_0)_{ij}=0$ for $i\neq j$
\end{itemize}
For each sampling scheme, we generate $M=100$ samples of size $n=100$ from a multivariate normal $\mathcal{N}(0,\bs \Theta^{-1}_0)$. We take as dimension $p=60$ and $p=200$. 

\medskip
\textit{Contamination.\quad}
To simulate contamination, we use two different contamination settings \citep{Finegold}: (i) To every generated data set, we add $5$ or $10\%$ of cellwise contamination. Therefore, we randomly select $5$ and $10\%$ of the cells and draw them from a normal $\mathcal{N}(10, 0.2)$. (ii) To simulate model deviation, we draw all observations from an alternative $t$-distribution $t^\ast_{100,2}(\mathbf{0},\bs \Theta_0^{-1})$ of dimension $100$ with 2 degree of freedom. 

Recall that a multivariate $t$-distributed random variable $\mathbf{x}\sim t_{n,\nu}(\mathbf{0}, \bs \Psi)$ is defined as a multivariate normally distributed random variable $\mathbf{y}=(y_1,\ldots,y_p)^\top\sim\mathcal{N}_p(\bo 0, \bs \Psi)$ divided by a gamma distributed variable $\tau\sim \Gamma(\nu/2, \nu/2)$
\begin{align*}
\mathbf{x}=\frac{\mathbf{y}}{\sqrt{\tau}}.
\end{align*}
To obtain an alternative $t$-distributed random variable $\mathbf{x}=(x_1,\ldots, x_p)^\top\sim t^\ast_{n, \nu}(\mathbf{0}, \bs\Psi)$, we draw $p$ independent divisors $\tau_j\sim \Gamma(\nu/2, \nu/2)$ for the different variables $j=1,\ldots,p$ 
\begin{align*}
x_j = \frac{y_j}{\sqrt{\tau_j}}.
\end{align*}
The heaviness of the tails is then different for different variables of $\mathbf{x}$.

\medskip
\textit{Performance measures.\quad}
We assess the performance of the estimators using the Kullback-Leibler divergence \citep[][p437]{Buhlmann}
\begin{align*}
KL(\hat{\bs \Theta}, \bs \Theta_0)=\tr(\bs \Theta_0^{-1}\hat{\bs \Theta})-\log \det(\bs \Theta_0^{-1}\hat{\bs \Theta})-p.
\end{align*}
It measures how close the obtained estimate $\hat{\bs\Theta}$ is to the true parameter $\bs \Theta_0$. Lower values represent a better estimate. If the estimator is equal to the true precision matrix, the Kullback-Leibler distance is equal to zero. The less accurate the precision matrix is estimated, the higher the value of the Kullback-Leibler distance becomes.

To measure how well the sparseness of the true precision matrix is recovered, we also look at false positive (FP) and false negative (FN) rates:
\begin{align*}
FP&=\frac{|\{(i,j): i=1,\ldots,n; j=1,\ldots,p:(\hat{\bs \Theta})_{ij} \neq 0\land (\bs \Theta_0)_{ij} = 0\}|}{|\{(i,j): i=1,\ldots,n; j=1,\ldots,p:(\bs \Theta_0)_{ij} = 0\}|}\\
FN&=\frac{|\{(i,j): i=1,\ldots,n; j=1,\ldots,p:(\hat{\bs \Theta})_{ij} = 0\land (\bs \Theta_0)_{ij} \neq 0\}|}{|\{(i,j): i=1,\ldots,n; j=1,\ldots,p:(\bs \Theta_0)_{ij} \neq 0\}|}
\end{align*}
The false positive rate gives the percentage of zero-elements in the true precision matrix that are wrongly estimated as nonzero. In contrast, the false negative rate gives the percentage of nonzero-elements in the true precision matrix that are wrongly estimated to be zero. Both values are desired to be as small as possible. However, a large false negative rate has a worse impact since it implies that associations between variables are not found and therefore important information is not used. A large false positive rate indicates that unnecessary associations are included, which `only' complicates the model. Note that if $\bs \Theta_0$ does not contain any zero-entries, the false positive rate is not defined. In graphical modeling, a high false negative rate indicates that many non-zero edges that should be included in the estimated graph are missed. This implies that there are conditional independencies assumed which are not supported by the true graph.

\medskip
\textit{Simulation results.\quad}
Results for $p=60$ are given in Table~\ref{Oellerer:tab:low}. For clean data in the banded scenario, the classical GLASSO (`GlassoClass') is performing best, achieving lowest values of KL. Only marginally higher values of KL are obtained by the correlation based precision matrix using Gaussian rank correlation (`GlassoGaussQn') and the regularized spatial sign covariance matrix (`GlassoSpSign'). Their good performance can be explained by their high efficiency at the normal model. Even though this data is clean, the inverse of the sample covariance matrix (`Classic') is performing very poorly. This is due to the low precision of the sample covariance matrix for a data set with $p>n/2$. Regularization of the inverse of the sample covariance matrix is solving the problem, as we see from the classical GLASSO. Note that the sample covariance matrix always gives an FN of zero, since the resulting estimate is not sparse, and should therefore not be considered to evaluate the performance of the sample covariance matrix. The correlation based precision matrix using Spearman correlation (`GlassoSpearmanQn') obtains a slightly higher value of KL than `GlassoGaussQn'. It probably suffers from its inconsistency. This also explains why the KL of the correlation based precision matrix using Quadrant correlation (`GlassoQuadQn') is so much higher, since the asymptotic bias of the Quadrant correlation is considerably higher than that of Spearman. The performance of the covariance based precision matrix (`GlassoNPDQn') lies in between `GlassoSpearmanQn' and `GlassoQuadQn'. 

Under contamination, the relative performance of the different estimators changes. Clearly, the classical GLASSO is not robust, and it achieves the highest values of KL of all estimators. Also the regularized spatial sign covariance matrix does not perform well. This is no surprise since for 5\% of cellwise contamination, already more than 90\% of the observations are expected to be contaminated. Thus, the level of rowwise contamination is too high for `GlassoSpSign' to obtain reliable results. Best performance under contamination is obtained by the correlation based precision matrices using Gaussian rank or Spearman correlation. They give lowest values of KL for all three contamination schemes. Moderately larger values are obtained by `GlassoQuadQn'. Of the cellwise robust estimators, the covariance based precision matrix estimator is performing worst under contamination. It obtains highest values of KL and FN in all three contamination settings. Under 10\% of cellwise contamination the value of KL of `GlassoNPDQn' is nearly double that of `GlassoSpearmanQn'. 

Looking at the other three sampling schemes `sparse', `dense' and `diagonal', the conclusions are very similar to that of the banded scheme: For clean data `GlassoClass' is doing best, closely followed by `GlassoGaussQn' and `GlassoSpSign'. Under contamination `GlassoGaussQn' and `GlassoSpearmanQn' are performing best, while `GlassoNPDQn' gives worst results of all cellwise robust estimators. For the sparse settings `sparse' and `diagonal' we also compare the different values of the FP and FN. In the setting `diagonal' the values are more or less the same for all estimators (apart from the sample covariance matrix which does not give sparse results and therefore has a FP equal to one). In the setting `sparse', differences are more outspoken. The covariance based precision matrix estimator gives a FN of up to double that of `GlassoGaussQn' or `GlassoSpearmanQn', which is not made up by the slightly lower value of FP. In graphical modeling that means that many nonzero edges are missed by `GlassoNPDQn', while they are correctly identified by `GlassoGaussQn' and `GlassoSpearmanQn'. 
\smallskip

\begin{table}[t]
\caption{Simulation results for $n=100$ and $p=60$: Kullback-Leibler criterion (KL),  false positive rate (FP) and false negative rate (FN) averaged over $M=100$ simulations reported for 7 estimators and 4 sampling schemes}
\label{Oellerer:tab:low}
{
\resizebox{0.75\textwidth}{!}{\begin{minipage}{\textwidth}
\centering
\begin{tabular}{|l|l||c|c|c||c|c|c||c|c|c||c|c|c|}
\hline
&& \multicolumn{3}{c||}{clean} & \multicolumn{3}{c||}{ 5\% cellwise} & \multicolumn{3}{c||}{ 10\% cellwise}  & \multicolumn{3}{c|}{ alternative $t$} \\ 
\hline
& & KL & FP & FN & KL & FP & FN & KL & FP & FN & KL & FP & FN \\ 
\hline
\multirow{7}{*}{banded} & GlassoClass & 8.97 &  & .70 & 55.00 &  & .95 & 77.11 &  & .94 & 143.16 &  & .98 \\ 
& GlassoQuadQn & 14.96 &  & .83 & 19.20 &  & .86 & 24.44 &  & .90 & 31.10 &  & .87 \\ 
& GlassoGaussQn & 9.62 &  & .75 & 16.91 &  & .83 & 23.52 &  & .88 & 28.41 &  & .84 \\ 
& GlassoSpearmanQn & 10.09 &  & .76 & 16.32 &  & .83 & 22.69 &  & .87 & 27.92 &  & .84 \\ 
& GlassoNPDQn & 11.90 &  & .85 & 21.73 &  & .91 & 43.59 &  & .97 & 37.34 &  & .92 \\ 
& Classic & 71.54 &  & .00 & 49.24 &  & .00 & 61.01 &  & .00 & 67.84 &  & .00 \\ 
& GlassoSpSign & 9.53 &  & .74 & 53.92 &  & .96 & 77.54 &  & .95 & 80.41 &  & .94 \\ 
\hline
\multirow{7}{*}{sparse} & GlassoClass & 5.87 & .23 & .09 & 63.70 & .02 & .82 & 88.81 & .04 & .81 & 140.09 & .00 & .85 \\ 
& GlassoQuadQn & 10.28 & .15 & .38 & 14.20 & .12 & .47 & 19.04 & .09 & .56 & 26.28 & .10 & .45 \\ 
& GlassoGaussQn & 6.34 & .21 & .11 & 12.25 & .16 & .30 & 18.39 & .11 & .49 & 24.09 & .13 & .28 \\ 
& GlassoSpearmanQn & 6.74 & .21 & .13 & 11.75 & .16 & .27 & 17.67 & .12 & .43 & 23.71 & .14 & .26 \\ 
& GlassoNPDQn & 8.25 & .13 & .23 & 17.85 & .06 & .47 & 42.14 & .01 & .82 & 32.73 & .06 & .52 \\ 
& Classic & 71.54 & 1.00 & .00 & 49.39 & 1.00 & .00 & 66.83 & 1.00 & .00 & 62.79 & 1.00 & .00 \\ 
& GlassoSpSign & 6.35 & .21 & .11 & 62.36 & .01 & .83 & 89.37 & .02 & .83 & 76.37 & .04 & .65 \\ 
\hline
\multirow{7}{*}{dense} & GlassoClass & 4.40 &  & .92 & 42.52 &  & .96 & 64.95 &  & .94 & 128.00 &  & .98 \\ 
& GlassoQuadQn & 4.65 &  & .94 & 7.66 &  & .95 & 11.66 &  & .96 & 20.72 &  & .97 \\ 
& GlassoGaussQn & 4.59 &  & .93 & 7.59 &  & .94 & 11.70 &  & .96 & 20.72 &  & .96 \\ 
& GlassoSpearmanQn & 4.61 &  & .94 & 7.59 &  & .94 & 11.80 &  & .96 & 20.81 &  & .97 \\ 
& GlassoNPDQn & 5.01 &  & .96 & 13.69 &  & .98 & 30.98 &  & .98 & 28.43 &  & .98 \\ 
& Classic & 71.54 &  & .00 & 39.88 &  & .00 & 49.44 &  & .00 & 59.08 &  & .00 \\ 
& GlassoSpSign & 4.62 &  & .94 & 41.65 &  & .97 & 65.21 &  & .96 & 69.46 &  & .98 \\ 
\hline
\multirow{7}{*}{diagonal} & GlassoClass & 1.31 & .05 & .00 & 66.11 & .01 & .00 & 93.48 & .03 & .00 & 124.03 & .00 & .00 \\ 
& GlassoQuadQn & 1.55 & .04 & .00 & 4.60 & .03 & .00 & 8.68 & .03 & .00 & 17.69 & .02 & .00 \\ 
& GlassoGaussQn & 1.53 & .04 & .00 & 4.54 & .04 & .00 & 8.67 & .03 & .00 & 17.61 & .02 & .00 \\ 
& GlassoSpearmanQn & 1.55 & .04 & .00 & 4.57 & .04 & .00 & 8.68 & .03 & .00 & 17.78 & .02 & .00 \\ 
& GlassoNPDQn & 1.92 & .02 & .00 & 11.02 & .00 & .00 & 33.94 & .00 & .00 & 25.46 & .00 & .00 \\ 
& Classic & 71.54 & 1.00 & .00 & 48.41 & 1.00 & .00 & 68.67 & 1.00 & .00 & 56.26 & 1.00 & .00 \\ 
& GlassoSpSign & 1.54 & .04 & .00 & 62.99 & .01 & .00 & 93.75 & .02 & .00 & 66.05 & .01 & .00 \\ 
\hline
\end{tabular}
\end{minipage}}
}
\end{table}

The simulation results for $p=200$ are given in Table~\ref{Oellerer:tab:high}. Since $p>n$, the sample covariance matrix cannot be inverted anymore and is excluded from the analysis. Overall, the conclusions are similar to $p=60$. For clean data, the classical GLASSO performs best. Marginally larger values of KL are obtained by `GlassoGaussQn' and `GlassoSpSign'. In comparision to $p=60$, here also `GlassoSpearmanQn' is doing very well for clean data. 

For $p=200$, we see again that under any type of contamination the classical GLASSO and the regularized spatial sign covariance matrix are not reliable any more. In contrast, the cellwise robust correlation based precision matrix estimators achieve very good results, especially in combination with Gaussian rank or Spearman correlation. Their KL as well as their FN are lowest of all estimators for all settings considered here. The covariance based correlation estimate is considerably less accurate than the correlation based estimates. Under higher amounts of cellwise contamination `GlassoNPDQn' can have a KL of more than four times the value of the correlation based precision matrix estimators. Besides, its FN is higher in all settings considered.

\begin{table}[t]
\caption{Simulation results for $n=100$ and $p=200$: Kullback-Leibler criterion (KL),  false positive rate (FP) and false negative rate (FN) averaged over $M=100$ simulations reported for 7 estimators and 4 sampling schemes} 
\label{Oellerer:tab:high}
{\resizebox{0.75\textwidth}{!}{\begin{minipage}{\textwidth}
\centering
  \begin{tabular}{|l|l||c|c|c||c|c|c||c|c|c||c|c|c|}
  \hline
  && \multicolumn{3}{c||}{clean} & \multicolumn{3}{c||}{ 5\% cellwise} & \multicolumn{3}{c||}{ 10\% cellwise}  & \multicolumn{3}{c|}{ alternative $t$} \\ 
  \hline
  & & KL & FP & FN & KL & FP & FN & KL & FP & FN & KL & FP & FN \\ 
  \hline
  \multirow{7}{*}{banded} & GlassoClass & 38.32 &  & .89 & 187.21 &  & .98 & 262.04 &  & .98 & Inf &  & .99 \\ 
  & GlassoQuadQn & 56.42 &  & .94 & 70.67 &  & .95 & 86.97 &  & .97 & 112.04 &  & .96 \\ 
  & GlassoGaussQn & 40.18 &  & .91 & 63.97 &  & .94 & 84.67 &  & .96 & 103.21 &  & .94 \\ 
  & GlassoSpearmanQn & 41.53 &  & .90 & 61.90 &  & .93 & 82.92 &  & .96 & 101.81 &  & .93 \\ 
  & GlassoNPDQn & 52.42 &  & .96 & 102.91 &  & .98 & 200.13 &  & .99 & 164.86 &  & .99 \\ 
  & Classic &  &  &  &  &  &  &  &  &  &  &  &  \\ 
  & GlassoSpSign & 39.68 &  & .90 & 189.47 &  & .98 & 265.32 &  & .98 & 326.08 &  & .99 \\ 
  \hline
  \multirow{7}{*}{sparse} & GlassoClass & 46.65 & .10 & .52 & 220.90 & .02 & .93 & 302.94 & .02 & .93 & Inf & .00 & .95 \\ 
  & GlassoQuadQn & 60.42 & .09 & .72 & 75.70 & .07 & .77 & 93.11 & .05 & .81 & 119.80 & .06 & .78 \\ 
  & GlassoGaussQn & 48.45 & .10 & .54 & 69.70 & .08 & .69 & 90.69 & .06 & .79 & 115.29 & .06 & .71 \\ 
  & GlassoSpearmanQn & 49.60 & .10 & .56 & 68.27 & .08 & .67 & 88.92 & .06 & .76 & 114.46 & .06 & .70 \\ 
  & GlassoNPDQn & 58.64 & .07 & .64 & 111.15 & .03 & .84 & 215.95 & .00 & .95 & 167.81 & .02 & .85 \\ 
  & Classic &  &  &  &  &  &  &  &  &  &  &  &  \\ 
  & GlassoSpSign & 47.97 & .10 & .54 & 223.28 & .01 & .93 & 306.49 & .01 & .94 & 339.53 & .01 & .93 \\ 
  \hline
  \multirow{7}{*}{dense} & GlassoClass & 9.70 &  & .97 & 137.73 &  & .98 & 214.08 &  & .98 & Inf &  & .99 \\ 
  & GlassoQuadQn & 10.41 &  & .98 & 21.12 &  & .98 & 35.06 &  & .98 & 66.07 &  & .99 \\ 
  & GlassoGaussQn & 10.35 &  & .98 & 20.91 &  & .98 & 35.06 &  & .98 & 65.54 &  & .99 \\ 
  & GlassoSpearmanQn & 10.39 &  & .98 & 21.11 &  & .98 & 34.92 &  & .98 & 65.80 &  & .99 \\ 
  & GlassoNPDQn & 15.27 &  & .99 & 65.43 &  & .99 & 146.61 &  & .99 & 121.14 &  & .99 \\ 
  & Classic &  &  &  &  &  &  &  &  &  &  &  &  \\ 
  & GlassoSpSign & 10.53 &  & .98 & 140.17 &  & .99 & 217.08 &  & .98 & 270.10 &  & .99 \\ 
  \hline
  \multirow{7}{*}{diagonal} & GlassoClass & 5.41 & .02 & .00 & 224.15 & .01 & .00 & 317.07 & .01 & .00 & Inf & .00 & .00 \\ 
  & GlassoQuadQn & 6.14 & .02 & .00 & 17.12 & .01 & .00 & 30.71 & .01 & .00 & 61.51 & .01 & .00 \\ 
  & GlassoGaussQn & 6.05 & .02 & .00 & 17.15 & .01 & .00 & 30.66 & .01 & .00 & 61.37 & .01 & .00 \\ 
  & GlassoSpearmanQn & 6.07 & .02 & .00 & 17.08 & .01 & .00 & 30.71 & .01 & .00 & 61.18 & .01 & .00 \\ 
  & GlassoNPDQn & 10.83 & .00 & .00 & 63.60 & .00 & .00 & 167.73 & .00 & .00 & 114.93 & .00 & .00 \\ 
  & Classic &  &  &  &  &  &  &  &  &  &  &  &  \\ 
  & GlassoSpSign & 6.29 & .01 & .00 & 225.91 & .01 & .00 & 320.17 & .01 & .00 & 265.02 & .00 & .00 \\ 
  \hline
  \end{tabular}
\end{minipage}}}
\end{table}

Since in high-dimensional analysis computation time is important for practical usage of the estimators, Table~\ref{Oellerer:tab:time} gives an overview of the average computation time that the different estimators require. The computation time was comparable throughout the different simulation schemes. Therefore, we only give averages. Note that the reported computation time includes the selection of $\rho$ via 5-fold crossvalidation. For $p=60$, the correlation based precision matrices, the classical GLASSO and the regularized spatial sign covariance matrix need very similar computation times. This indicates that the GLASSO algorithm takes most of the computation time and that the computation time of the initial covariance matrices is negligible. In contrast, the covariance based precision matrix estimator is nearly four times slower. For $p=200$, the classical GLASSO and the regularized spatial sign covariance matrix can be computed fastest. But as they are not robust enough, the estimates are very inaccurate. Computation of the correlation based estimators is still very fast here. The estimation including the selection of $\rho$ over a grid of ten values takes less than 10 seconds. In contrast, estimation of the covariance based precision matrix takes more than 20 times longer. 

\begin{table}[t]
\centering
\caption{Computation time (in sec.) for samples of size $n=100$ (including selection of $\rho$ via 5-fold cross-validation) averaged over $M=100$ simulations and all simulation schemes reported for 7 estimators}
\label{Oellerer:tab:time}
{\footnotesize
\begin{tabular}{|l|c|c|}
  \hline
 & $p=60$ & $p=200$ \\ 
  \hline
 GlassoClass & 5.93 & 7.69 \\ 
   GlassoQuadQn & 6.13 & 9.12 \\ 
   GlassoGaussQn & 6.09 & 9.15 \\ 
    GlassoSpearmanQn & 5.82 & 9.01 \\ 
    GlassoNPDQn & 22.85 & 216.79\\ 
    Classic & 0.00 & 0.02\\ 
    GlassoSpSign & 5.73 & 8.11\\ 
   \hline
\end{tabular}
}
\end{table}

Since we advertise the high breakdown point of the correlation based precision matrix estimators, we also look at the performance of the estimators under higher amounts of cellwise contamination, ranging from 0 to 40\%. Fig.~\ref{Oellerer:fig:1} plots the value of KL for the most representative precision matrix estimators for $p=60$ (left panel) and $p=200$ (right panel), following the `banded' sampling scheme. As expected, the nonrobust `GlassoClass' results in the highest values of KL. For higher amounts of cellwise contamination, the KL of the `GlassoNPDQn' deteriorates quickly. This is in contrast with the more robust `GlassoGaussQn', where the KL measure remains limited for higher contamination levels, both for $p=60$ and $p=200$. The results for the sampling schemes `sparse', `dense' and `diagonal' are comparable to Fig.~\ref{Oellerer:fig:1} and are therefore omitted.

\begin{figure}[t]
\centerline{
		\includegraphics[width=0.5\textwidth]{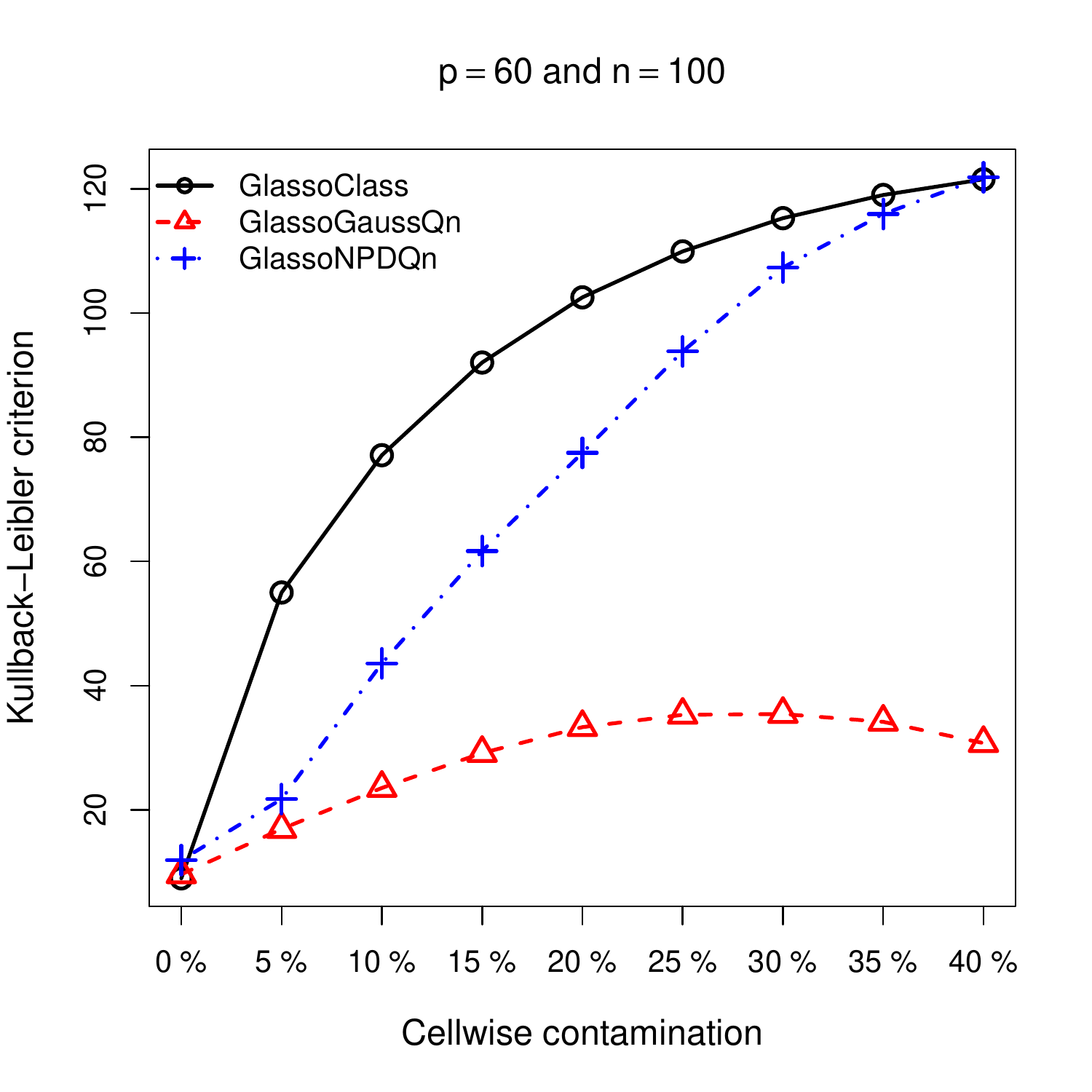}%
		\includegraphics[width=0.5\textwidth]{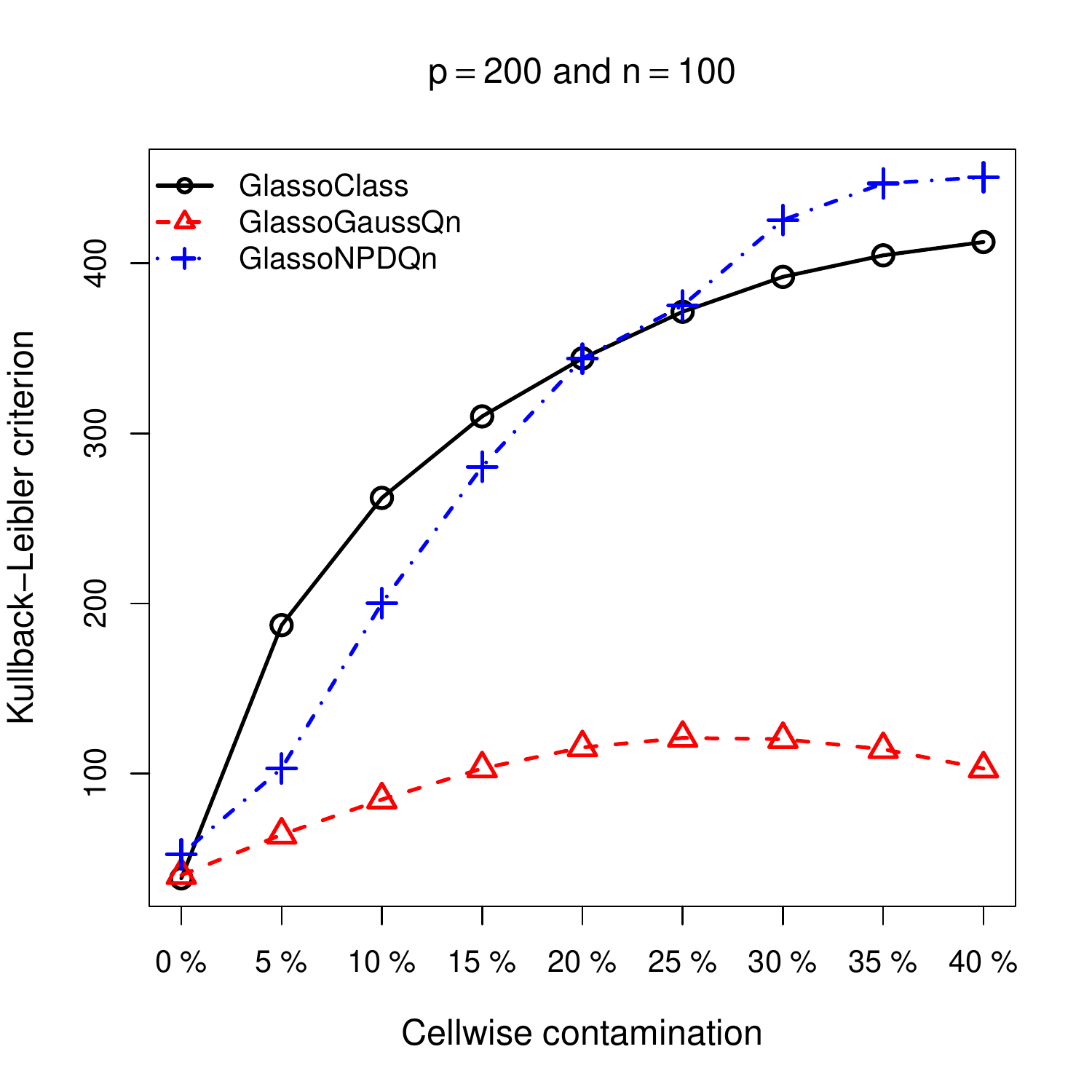}%
}
\caption{Kullback-Leibler criterion for the `banded' sampling scheme averaged over $M=100$ simulations reported for various amounts of cellwise contamination and several estimators}
\label{Oellerer:fig:1}
\end{figure}

To summarize, for clean data the classical GLASSO performs best. Under cellwise contamination, `GlassoGaussQn' and `GlassoSpearmanQn' achieve best results. All three estimators can be computed equally fast. Since the `GlassoGaussQn' is consistent and performs similarly well as the classical GLASSO for clean data, we advise the `GlassoGaussQn' for high-dimensional sparse precision matrix estimation under cellwise contamination.

\section{Applications}
\label{Oellerer:sec:app}

In this paper, we describe how a cellwise robust, sparse precision matrix estimator can be obtained. To show the applicability of the introduced estimator to a real world data set, we use the dataset \texttt{stockdata}, which is publicly available through the \textsf{R}-package \texttt{huge} \citep{huge}. It consists of the closing prices of $p=452$ stocks in the S\&P on all trading days between January 1,2003 and January 1, 2008, leading to $n=1258$ observations.
We use the same data transformations and parameter choices as in \cite{Zhao}. The estimated graphical models returned by `GlassoClass' and `GlassoGaussQn' are visualized in Panel (a) and (b) of Fig.~\ref{Oellerer:tab:network}. From the plots, we can conclude that the two graphs are very similar. Indeed, only around 2\% of the selected edges in `GlassoClass' are not selected in `GlassoGaussQn', while the percentage is even smaller vice versa. As a result, we assume that \texttt{stockdata} is a rather clean data set.

To see how the estimators behave under contamination, we randomly select 5\% of the cells of the data matrix and replace them by replicates of the normal distribution $\mathcal{N}(10, 0.2)$. The graphs estimated by `GlassoClass' and `GlassoGaussQn' from the contaminated data are shown in Panels (c) and (d) of Fig.~\ref{Oellerer:tab:network}, respectively. While the graph estimated by `GlassoGaussQn' hardly differs from the uncontaminated case, `GlassoClass' estimates a graph without any edges. Thus, `GlassoGaussQn' is robust in the sense that the estimate on the contaminated data resembles that of the clean data. In contrast, the nonrobust `GlassoClass' returns a not reliable estimate in the presence of cellwise contaminated data.

\begin{figure}
\resizebox{0.9\textwidth}{!}{\begin{minipage}{\textwidth}
\begin{tabular}{|c|c|c|c|}
\hline
\multicolumn{2}{|c|}{data} & \multicolumn{2}{c|}{contaminated data}\\
\hline
`GlassoClass' & `GlassoGaussQn' & `GlassoClass' & `GlassoGaussQn' \\
\hline
	\includegraphics[width=0.245\textwidth]{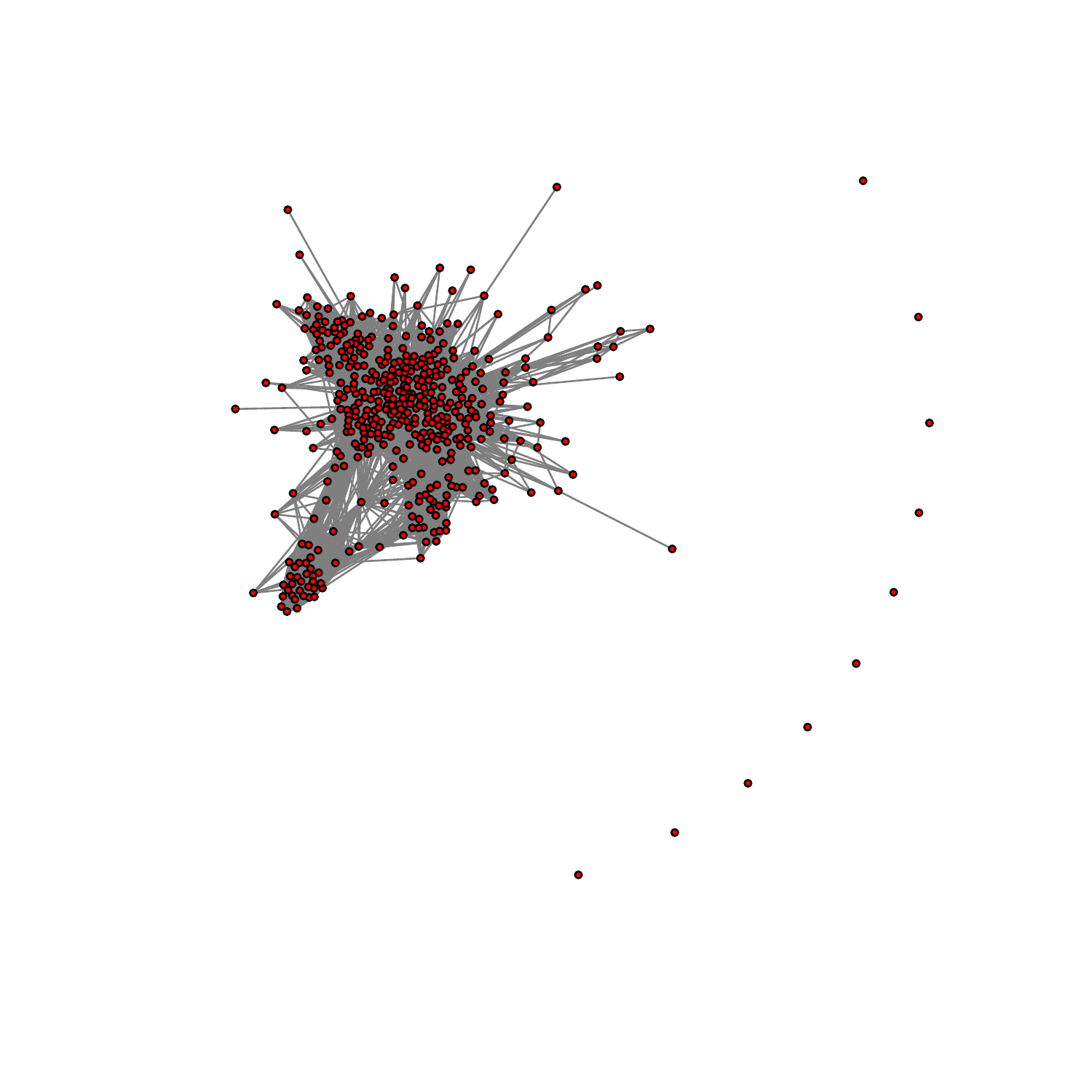} &
\includegraphics[width=0.245\textwidth]{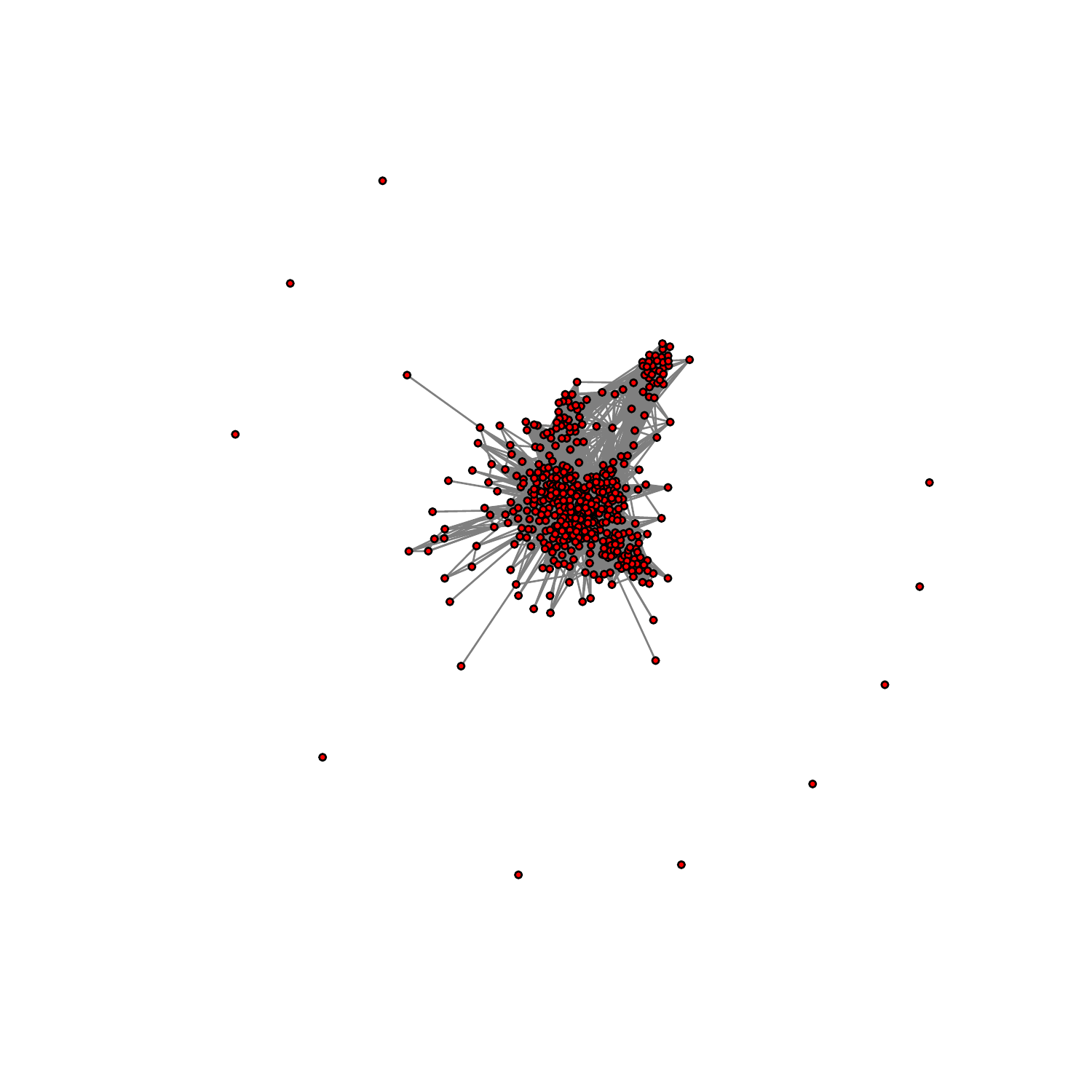} &
\includegraphics[width=0.245\textwidth]{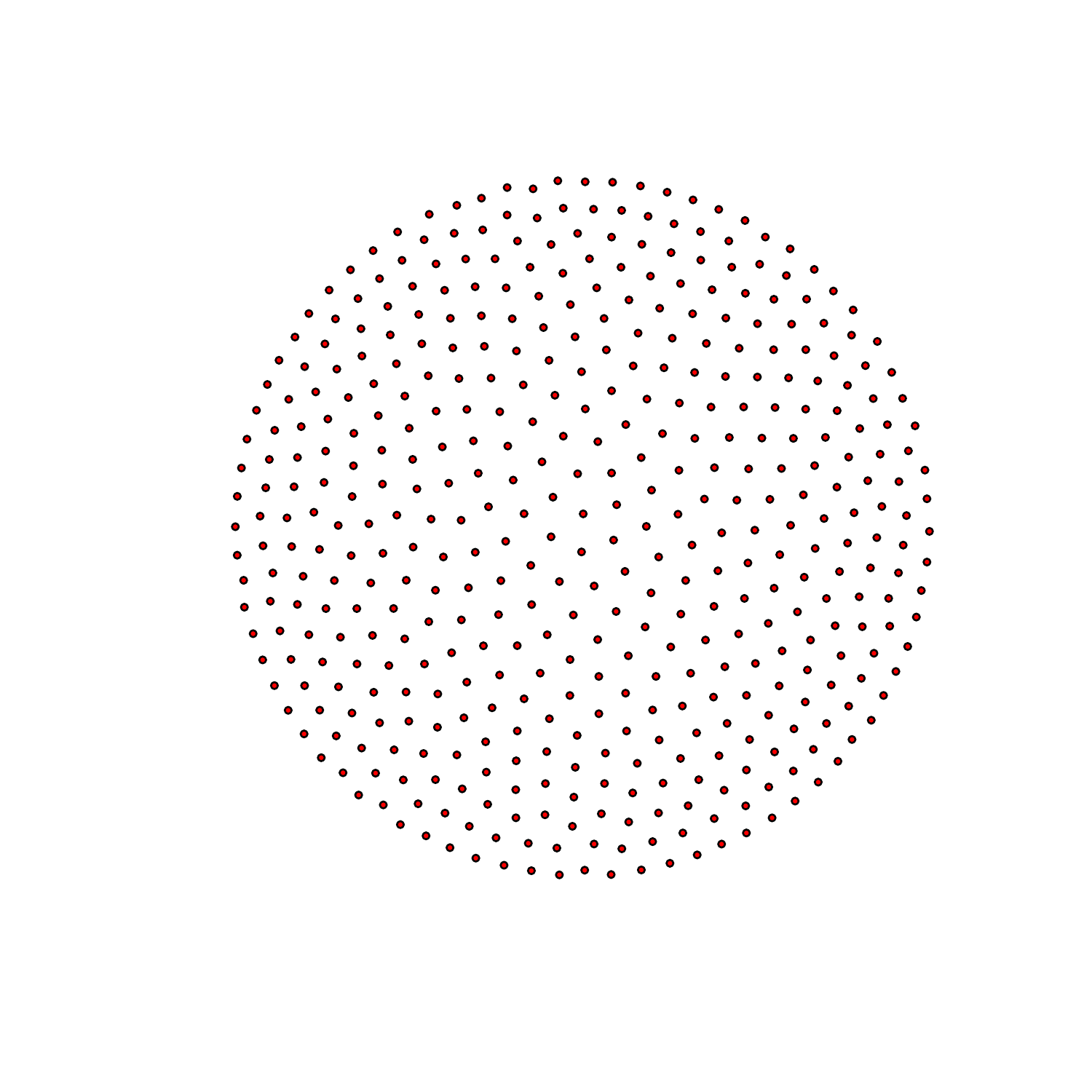} & 
\includegraphics[width=0.245\textwidth]{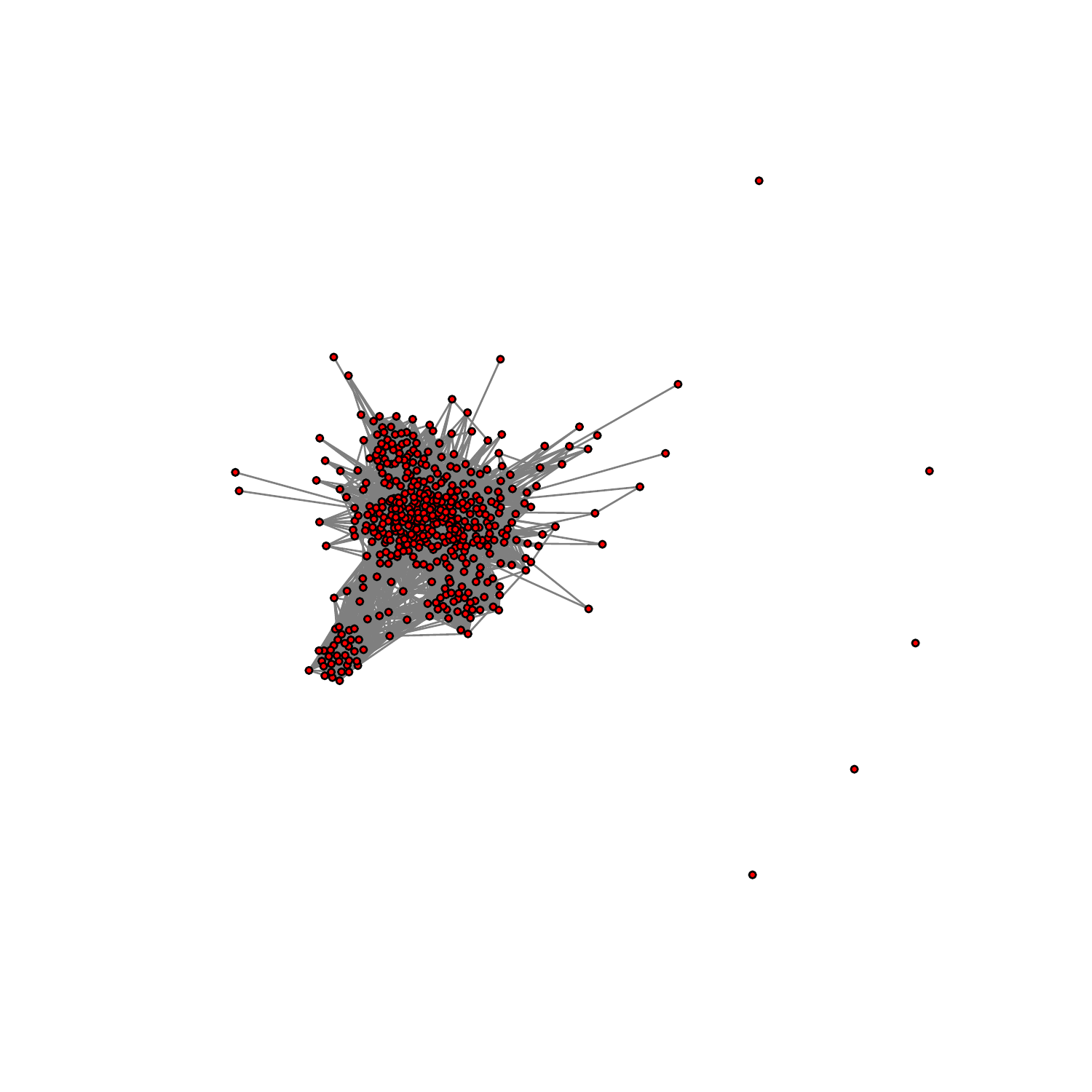} \\
(a) & (b) & (c) & (d)\\
\hline
\end{tabular}\end{minipage}}
\caption{Graphical models estimated from \texttt{stockdata}. Every node in the graph corresponds to one of the $p=452$ stocks.}
\label{Oellerer:tab:network}
\end{figure}
\medskip

Estimating a cellwise robust, sparse precision matrix is not only interesting in graphical models. As an example consider linear discriminant analysis, where each observation belongs to one of $K$ groups. The goal is then to assign a new observation $\mathbf{x}\in\mathbb{R}^p$ to one of those $K$ groups. Assuming a normal distributions $\mathcal{N}(\mathbf{\mu}_k,\bs \Sigma)$ for observations of group $k\in\{1,\ldots,K\}$, the Bayes optimal solution is found via the linear discriminant function 
\begin{align*}
\delta_k(\mathbf{x})=\mathbf{x}^\top\bs \Sigma^{-1}\mathbf{\mu}_k-\frac{1}{2}\mathbf{\mu}_k^\top\bs\Sigma^{-1}\mathbf{\mu}_k+\log \pi_k,
\end{align*}
where $\pi_k$ is the a priori probability of belonging to group $k$. Replacing $\Sigma^{-1}$ with the correlation based precision matrix estimated from the centered data (where each observation is centered by the coordinatewise median computed over the observations belonging to the same group) results in a cellwise robust estimator for high-dimensional linear discriminant analysis. The final estimate may not be sparse anymore, but it is very robust under cellwise contamination. Furthermore, it can be computed even if $p>n$.

Cellwise robust, sparse precision matrix estimation can also be used to obtain cellwise robust, sparse regression of $\mathbf{y}\in\mathbb{R}^n$ on $\bo X\in\mathbb{R}^{n\times p}$. Partitioning the joint sample covariance estimate of $(\bo X, \mathbf{y})$ and its inverse into 
\begin{align*}
\hat{\bs \Sigma}=\begin{pmatrix}
\hat{\bs \Sigma}_{\bo X\bo X} & \mathbf{\hat{\sigma}}_{\bo X \mathbf{y}}\\
\mathbf{\hat{\sigma}}_{\bo X \mathbf{y}}^\top & \hat{\sigma}_{\mathbf{y}\mathbf{y}}
\end{pmatrix}
\qquad
\hat{\bs \Theta}=\begin{pmatrix}
\hat{\bs \Theta}_{\bo X\bo X} & \mathbf{\hat{\theta}}_{\bo X\mathbf{y}}\\
\mathbf{\hat{\theta}}_{\bo X\mathbf{y}}^\top & \hat{\theta}_{\mathbf{y}\mathbf{y}}
\end{pmatrix}
\end{align*}
the least squares estimator can be rewritten as 
\begin{align*}
\mathbf{\hat{\beta}}_{LS}=\hat{\bs\Sigma}_{\bo X\bo X}^{-1}\hat{\mathbf{\sigma}}_{\bo X\mathbf{y}}=- \frac{1}{\theta_{yy}}\hat{\bs \Theta}_{\bo X \mathbf{y}}
\end{align*}
using the partitioned inverse formula \citep[][14.11]{Seber}. With the correlation based precision matrix estimate $\hat{\bs \Theta}_{\bo S}((\bo X, \mathbf{y}))$ computed jointly from $(\bo X, \mathbf{y})$, we obtain a cellwise robust, sparse regression estimate computable in high-dimensions
\begin{align*}
\mathbf{\hat{\beta}}=- \frac{1}{(\hat{\bs \Theta}_{\bo S}((\bo X, \mathbf{y})))_{p+1,p+1}}(\hat{\bs \Theta}_{\bo S}((\bo X, \mathbf{y})))_{1:p,p+1}.
\end{align*}

\section{Conclusions}
\label{Oellerer:sec:conclusion}

We have introduced a cellwise robust, correlation based precision matrix estimator. We put forward the following simple procedure: (i) compute the robust scale estimators $Q_n$ for each variable (ii) compute the robust correlation matrix from the normal scores, as in Equation~(\ref{Oellerer:eq:Gauss}) (iii) construct then the robust covariance matrix from these correlations and robust scale, as in Equation~(\ref{Oellerer:eq:pairwise}) (iv) use the latter as input for the GLASSO, returning $\hat{\bs \Theta}_{\bo S}(\bo X)$. It is formally shown that the proposed estimator features a very high breakdown point under cellwise contamination. As its definition is very simple, the estimator can be computed very fast, even in high-dimensions. 

The simulation results presented in Section~\ref{Oellerer:sec:sim} discuss the results of the various estimators \textit{including} their selection of the regularization parameter $\rho$. As can be seen from a small simulation study with $p=60$, kindly provided by a referee, the bad performance of `GlassoNPDQn' needs to be mainly attributed to the selection of $\rho$. When `GlassoNPDQn' is run with the regularization parameter estimated by `GlassoGaussQn', the two methods performed similar. This problem also occurred for clean data. Replacing CV by BIC did not help to improve `GlassoNPDQn': The performance in comparison to `GlassoGaussQn' was still similar as in Tables~\ref{Oellerer:tab:low} and \ref{Oellerer:tab:high}. Analyzing the reason for the bad performance of `GlassoNPDQn' with respect to the selection of the regularization parameter is left for future research.

Compared to the covariance based approach, a correlation based approach results in a simpler estimator. More importantly, it achieves a substantially higher breakdown point, is considerably faster to compute and yields more accurate estimates when the regularization parameter is selected using BIC or the new cross-validation criterion presented in Section~\ref{Oellerer:sec:rho}.

\subsection*{Acknowledgments}
The authors wish to acknowledge the helpful comments from two anonymous referees. The first author gratefully acknowledges support from the GOA/12/014 project of the Research Fund KU Leuven.

\bibliographystyle{plainnat}
\bibliography{sparsecov_Revision}

\end{document}